\documentclass[10pt, doublecolumn]{IEEEtran}
\usepackage{epsfig,latexsym}
\usepackage{float}
\usepackage{indentfirst}
\usepackage{amsmath}
\usepackage{bm}
\usepackage{amssymb}
\usepackage{times}

\usepackage{algorithm}
\usepackage[noend]{algpseudocode}
\usepackage{subfigure}
\usepackage{psfrag}
\usepackage{hyperref}
\usepackage{cite}
\usepackage{lastpage}
%\linespread{1.5}
\usepackage{fancyhdr}
\usepackage{color}
 \usepackage{amsthm}
\usepackage{bigints}
\usepackage{booktabs}
\usepackage{multirow}
\usepackage{siunitx}

\sloppy

\newtheorem{Lemma}{Lemma}
\newtheorem{Corollary}{Corollary}
\newtheorem{lemma}[Lemma]{$\mathbf{Lemma}$}

\newtheorem{corollary}[Corollary]{$\mathbf{Corollary}$}

\newcounter{problem}
\newcounter{save@equation}
\newcounter{save@problem}
\makeatletter
\newenvironment{problem}
{\setcounter{problem}{\value{save@problem}}%
  \setcounter{save@equation}{\value{equation}}%
  \let\c@equation\c@problem
  \subequations
}
{\endsubequations
  \setcounter{save@problem}{\value{equation}}%
  \setcounter{equation}{\value{save@equation}}%
}

\begin{document}%%
\title{  \huge{Design of Downlink Hybrid NOMA Transmission  }}

\author{\author{ Zhiguo Ding, \IEEEmembership{Fellow, IEEE},  Robert Schober, \IEEEmembership{Fellow, IEEE},  and H. Vincent Poor, \IEEEmembership{Life Fellow, IEEE}   \thanks{ 
  
\vspace{-2em}

  Z. Ding is with  Khalifa University, Abu Dhabi, UAE, and  University of Manchester, Manchester, M1 9BB, UK.   R. Schober is with the Institute for Digital Communications,
Friedrich-Alexander-University Erlangen-Nurnberg (FAU), Germany.  H. V. Poor is  with   Princeton University, Princeton, NJ 08544,
USA.   }
 
 \vspace{-2em}

  }\vspace{-5em}}
 \maketitle

\vspace{-1em}
\begin{abstract}
The aim of this paper is to develop hybrid non-orthogonal multiple access (NOMA)  assisted downlink transmission. First, for the    single-input single-output (SISO) scenario, i.e., each node is equipped with a single antenna, a novel hybrid NOMA  scheme is introduced, where NOMA is implemented as an add-on of a legacy time division multiple access (TDMA) network.  Because of the simplicity of    the SISO scenario, analytical results can be developed to reveal  important properties of downlink hybrid NOMA. For example,  in the case that the users' channel gains are ordered  and the durations of  their  time slots are the same, downlink hybrid NOMA is shown to always outperform TDMA, which is different from the existing conclusion for uplink hybrid NOMA. Second, the proposed downlink  SISO hybrid NOMA scheme is extended to  the     multiple-input single-output (MISO) scenario, i.e., the base station has multiple antennas.    For the MISO scenario,   near-field communication   is considered  to illustrate how NOMA can be used as an add-on in  legacy networks based on space division multiple access  and TDMA.  Simulation results   verify the developed analytical results and demonstrate the superior performance of downlink hybrid NOMA compared to conventional orthogonal multiple access.  
\end{abstract}\vspace{-0.5em}

\begin{IEEEkeywords}
Downlink hybrid non-orthogonal multiple access (NOMA), space division multiple access, near-field communication,  resolution of near-field beamforming. 
\end{IEEEkeywords}
\vspace{-1em}

 \section{Introduction}
 Multiple access techniques can be viewed as the foundation stone of modern mobile networks, since the design of many crucial  components of mobile networks, such as scheduling, resource allocation, channel estimation and signal detection, depends on which multiple access technique is used \cite{mojobabook}.  In the sixth-generation (6G) era, non-orthogonal multiple access (NOMA) has already received considerable  attention due to its superior spectral efficiency, compared to conventional orthogonal multiple access (OMA) \cite{you6g,10175525,10185552,9681865}.
 
Unlike   most existing works, e.g.  \cite{10218331,10194943,10041974,10158994,9028248},  which viewed 
 NOMA and OMA       as competing systems, this paper considers   NOMA as an add-on of OMA, which yields the following  two benefits \cite{9693536}. One is to shed light on the design of a unified framework for next-generation multiple access, and the other is to develop a vision for how    NOMA can be integrated into existing wireless systems,  which are based on OMA.     We note that   there are some existing works that  have recognized the importance of allowing   NOMA and OMA to co-exist. For example, in \cite{Zhiguo_CRconoma, 9693417,8823023,9352956,8641304}, user clustering has been carried out, where NOMA was implemented among the users within the same cluster and OMA was used to avoid inter-cluster interference. Similarly, in \cite{8895763,8618435,9525063},  various schemes  have been proposed to ensure that a user can intelligently  switch between the NOMA and OMA modes. While    these existing approaches  might realize a sophisticated coexistence between   NOMA and OMA, they cannot ensure that   NOMA is  used as a simple add-on of OMA, which causes major  disruptions to      OMA based  legacy networks, i.e.,  these NOMA approaches cannot be straightforwardly implemented in the currently deployed   OMA networks.
 
     \begin{figure}[t]\centering \vspace{-0em}
    \epsfig{file=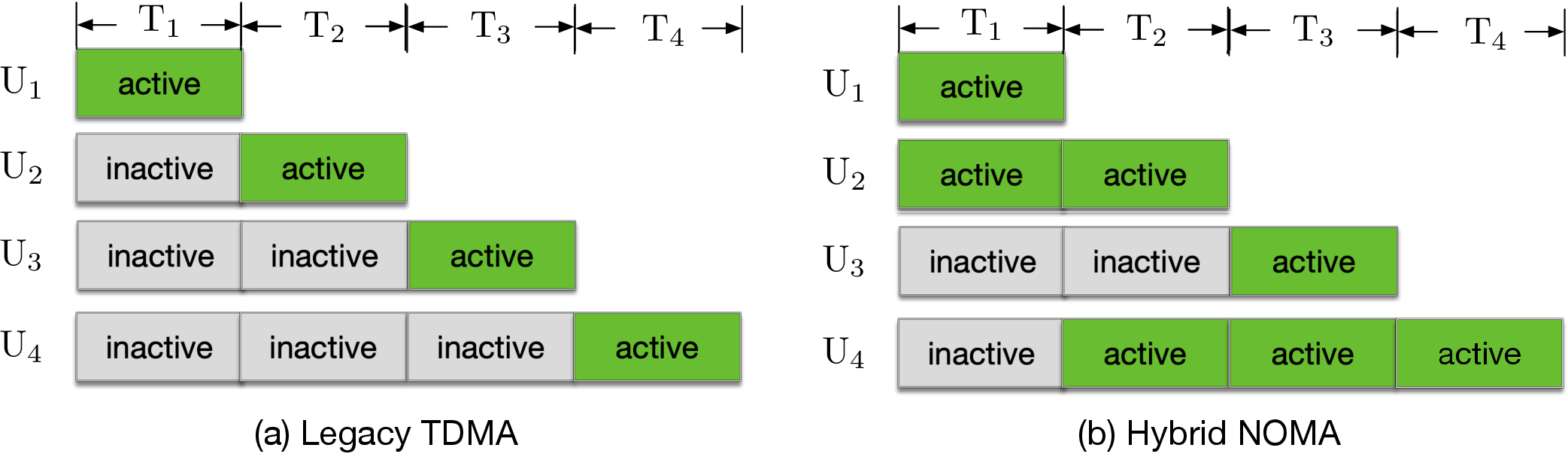, width=0.48\textwidth, clip=}\vspace{-0.5em}
\caption{Illustration of the key idea of hybrid NOMA transmission.    \vspace{-1em}    }\label{fig0}   \vspace{-1em} 
\end{figure}

Hybrid NOMA, a concept originally developed in mobile edge computing (MEC) networks \cite{9679390,9340353,Zhiguo_MEC1},  can   ensure that NOMA is implemented as an effective add-on of OMA.  The key idea of hybrid NOMA can be illustrated by the four-user example shown in Fig. \ref{fig0}. In particular, with conventional time division multiple access (TDMA), the four users, denoted by ${\rm U}_m$, $1\leq m \leq 4$, are served in four time slots individually,   as shown in Fig. 1(a). By using hybrid NOMA, the four users are still scheduled to finish their data transmission as in OMA, i.e., ${\rm U}_m$ finishes its transmission by the end of the $m$-th time slot, as shown in Fig. 1(b). Unlike TDMA, the use of   hybrid NOMA ensures that a user can also use   the time slots which are allocated to other users in TDMA, e.g.,   ${\rm U}_4$ can use not only its own TDMA time slot, i.e., the fourth time slot, but also the second and third time slots which belong exclusively to  ${\rm U}_2$ and ${\rm U}_3$ in TDMA. Because the users have more flexibility  to transmit, naturally   hybrid NOMA can outperform TDMA.    Time-slot allocation for hybrid NOMA can be effectively accomplished  by applying multi-time-slot power allocation. This is beneficial from the optimization perspective, since the optimal solution of power allocation is less challenging to obtain compared to time-slot allocation which is an   integer programming problem. We note that this multi-time-slot optimization feature makes hybrid NOMA different from the existing single-time-slot non-hybrid NOMA  schemes considered  in \cite{10218331,10194943,10041974,10158994,9693536}.

%  e.g., ${\rm U}_4$'s transmit power in the first time slot is set to zero if the user choses not to use this time slot.    

Because hybrid NOMA was originally developed for MEC, all   existing works on hybrid NOMA focused on uplink transmission. In particular, the use of hybrid NOMA   ensures that    multiple users can cooperate with each other for    offloading (or transmiting) their  computation tasks to a base station   \cite{9679390}. As shown in \cite{9679390,9340353,Zhiguo_MEC1}, if the energy consumption of   MEC offloading is used as the performance metric,  uplink hybrid NOMA yields better performance than pure NOMA, and for the two-user special case, hybrid NOMA is shown to outperform  OMA if one user's task deadline is less than two times of the other user's task deadline. Uplink hybrid NOMA has also been shown to outperform      OMA, if   energy efficiency is used as the performance metric \cite{9174768,7523951}. The performance of uplink hybrid NOMA can be further improved by applying intelligent reflecting surfaces and deep reinforcement learning as shown in \cite{9964376,9968198}. The use of uplink hybrid NOMA has also been shown to be beneficial  to improve the secrecy performance of MEC offloading \cite{9362298}.

Unlike the aforementioned uplink hybrid NOMA works,  this paper aims to design downlink hybrid NOMA transmission.  The contributions of this paper are listed as follows:
\begin{itemize}
\item For the single-input single-output (SISO) scenario, i.e., both the base station and the users are equipped with a single antenna, a new hybrid NOMA assisted downlink transmission scheme is proposed. In particular, it is assumed that there exists a TDMA based legacy network, and with hybrid NOMA, the users are encouraged to use the time slots which they would  not have access to in TDMA.   In addition, a multi-objective energy consumption minimization   problem is formulated, and solved by using successive resource allocation \cite{9679390}.

\item The properties of the obtained  power allocation solutions are analyzed to unveil   the important features of hybrid NOMA downlink transmission. For example, for the case that the users' channel gains are ordered   and the durations of  their time slots are the same, downlink hybrid NOMA is shown to always outperform OMA. This conclusion is different from the one previously reported for the uplink case   \cite{Zhiguo_MEC1}. In addition, the obtained analytical results show that it is optimal for each user to use the same transmit power across  all NOMA time slots, and the users' accumulated transmit powers on different time slots are  the  same. Furthermore, the solution obtained from successive resource allocation is shown to be a Pareto-optimal solution of the formulated multi-objective energy minimization problem. 

\item The developed downlink SISO hybrid NOMA scheme is then extended to   the multiple-input single-output (MISO) scenario, i.e., the base station has multiple antennas and each  user is equipped with a single antenna. Unlike the SISO network, the legacy downlink MISO network is based on both TDMA and space-division multiple access (SDMA).  In particular, it is assumed that there exist two groups of users, where the users in each group are served simultaneously via SDMA, and TDMA is used to avoid inter-group interference. In order to ensure the compatibility to the legacy network,  downlink   hybrid NOMA is used to realizes beam sharing, i.e.,    one group of users can use the spatial beams preconfigured for the other group of users.   In order to demonstrate the feasibility of hybrid NOMA,   near-field downlink transmission is considered  as an illustrative  example, as in general   the accurate    beamfocusing    in near-field communications can make beam sharing difficult. 

\item For the MISO scenario, an energy consumption minimization problem is first formulated.   Compared to the case of SISO downlink hybrid NOMA, the considered energy minimization problem in the MISO case is more challenging. This is due to the fact that beam sharing leads to potential inter-beam interference, which makes the formulated energy minimization problem non-convex. By applying successive convex approximation (SCA), a low-complexity sub-optimal power allocation solution is obtained and shown to outperform the OMA solution for various simulation setups. 
\end{itemize}

   The remainder  of this paper is organized as follows. In Section \ref{section siso}, the SISO scenario is considered, where the design of  downlink hybrid NOMA transmission is studied, and the properties of SISO hybrid  NOMA power allocation are unveiled.    In Sections \ref{section miso}, the MISO scenario is focused on, where the combination of SDMA and NOMA is investigated, and an energy consumption minimization problem is formulated and solved.  In Section \ref{section simulation},   simulation results are presented to demonstrate the performance of hybrid NOMA,  and the paper is concluded in Section \ref{section conclusion}.  
   
 \section{Downlink SISO Hybrid NOMA     Transmission}\label{section siso}
In this section, the application of hybrid NOMA in   SISO   downlink transmission  is studied.  Due to the simplicity of the SISO   scenario, an insightful understanding of the key features of  hybrid NOMA assisted     downlink transmission can be obtained  as will be shown in the following. 
 
 \subsection{Description of SISO Hybrid NOMA Transmission} 
 Consider  the scenario with a legacy  SISO TDMA  network with $M$ users, denoted by ${\rm U}_m$, where the base station serves ${\rm U}_m$ in the $m$-th time slot, denoted by ${\rm T}_m$. The key idea of hybrid NOMA is to encourage   spectrum sharing among the users, where a user can have access to multiple time slots, as shown in Fig. \ref{fig0}.  For illustrative  purposes, it is assumed that in ${\rm T}_i$, $(M-i+1)$ users, i.e., ${\rm U}_m$, $i\leq m \leq M$, are served simultaneously. For example, in ${\rm T}_1$, all $M$ users are served simultaneously, whereas in ${\rm T}_M$,  only ${\rm U}_M$ is served \footnote{How the users are scheduled depends on their quality of service requirements, the priority of the network    traffic,  and the features of the legacy network. For example, for MEC applications,   the users can be ordered according to the urgency of their computation tasks.  }. Therefore,  in ${\rm T}_i$, ${\rm U}_m$ receives the following signal: 
 \begin{align}
 y_{m,i} = h_{m}\sum^{M}_{j=i}\sqrt{P_{j,i}} s_{j,i} +n_{m,i},
 \end{align} 
 where $h_{m}$,  $P_{m,i}$,  $s_{m,i}$, and $n_{m,i}$ denote  ${\rm U}_m$'s channel gain, transmit power, transmit signal and received noise in ${\rm T}_i$, respectively.  Quasi-static fading is assumed, i.e., each user's channel gain remains constant within one time frame consisting of $M$ time slots. Furthermore,  the base station is assumed to have access to the users' channel state information (CSI). In practice, this CSI assumption can be realized by  asking each user to first perform channel estimation based on the pilot signals broadcasted by the base station, and then feed the CSI back to the base station via a reliable feedback channel.  
 
In ${\rm T}_i$, ${\rm U}_m$ observes the signals of $(M-i+1)$ users, and the following successive interference cancellation (SIC) is  carried out.    In particular, in each time slot, ${\rm U}_M$'s signal is always the first to be decoded, then    ${\rm U}_{M-1}$'s signal is decoded. In other words,  a descending decoding order is used, e.g., ${\rm U}_m$'s signal is decoded before ${\rm U}_j$'s, $m>j$. This descending decoding order ensures that  in ${\rm T}_m$,  ${\rm U}_m$'s signal can be decoded in the last SIC stage without interference, i.e., as if  ${\rm U}_m$   solely occupied  ${\rm T}_m$ via OMA.

  Given this SIC decoding order, in ${\rm T}_i$,  ${\rm U}_k$ can decode ${\rm U}_m$'s signal with the following achievable data rate \footnote{For notational simplicity, it is assumed that $\sum^{m-1}_{j=m}P_{j,i}=0$, and  the natural logarithm is used for the data rate expressions. }:
 \begin{align}
 R_{m,i}^k = \log\left(
 1+\frac{|h_{k}|^2P_{m,i}}{|h_{k}|^2\sum^{m-1}_{j=i}P_{j,i}+1 }
 \right),
 \end{align}
 for $i\leq k \leq m$ and $1\leq i \leq m$, 
 where the noise power is assumed to be normalized. As a result, by using hybrid NOMA, the achievable data rate of ${\rm U}_m$ at  ${\rm T}_i$ is give by
  \begin{align}\label{rmi}
 R_{m,i} = \min \left\{
  R_{m,i}^i, \cdots,  R_{m,i}^m
 \right\}.
 \end{align}
 {\it Remark 1:}  Downlink hybrid NOMA is a general framework, of which  conventional OMA is a   special case. In particular,  ${\rm U}_m$ can choose $P_{m,i}= 0$, $i\leq m$, and $P_{m,m}\neq 0$. This means that  ${\rm U}_m$ uses ${\rm T}_m$ only   and hence   pure OMA   is adopted   since ${\rm T}_m$ is allocated to ${\rm U}_m$ in OMA, as illustrated in Fig. 1(a).   By adjusting the power allocation coefficients, $P_{m,i}$, the use of  downlink hybrid NOMA can ensure that each user  fully  benefits    from the advantages of both NOMA and OMA transmissions. 

 %A fundamental issue for  a hybrid NOMA system is which of the three following modes, namely OMA, pure NOMA, and hybrid NOMA, yields the best performance, which is to be studied in the next section. 
 %%%%%%%%%%%%%%%%%%%%%%%%%%%%%%%%%%
  \subsection{Power Allocation for SISO Hybrid NOMA Transmission }
 The users' energy consumption will be used as the metric for   performance evaluation, as explained in the following. With  downlink  hybrid NOMA, each user can have access to multiple time slots, but     an improper use of these time slots  can lead to a  surge in energy consumption, e.g., when a user's transmit power was mistakenly chosen to be large in a time slot when its channel condition is poor.

 The   energy minimization problem considered in this paper can be formulated as follows:
  \begin{problem}\label{pb:1} 
  \begin{alignat}{2}
\underset{P_{m,i}\geq0  }{\rm{min}} &\quad     \mathbf{E} \triangleq  \begin{bmatrix}
E_1 &\cdots &E_M
\end{bmatrix}^T  \label{1tst:1}
\\ s.t. &\quad   \sum^{m}_{i=1}R_{m,i}T\geq N_b,    1\leq m \leq M, \label{1tst:2} 
  \end{alignat}
\end{problem}  
where it is assumed that each user needs to receive  the same amount of data nats, denoted by $N_b$,     $E_m = \sum^{m}_{i=1}P_{m,i}T$, and $T$ denotes the duration of each time slot. 

Problem \eqref{pb:1} is a multi-objective optimization problem, and hence challenging to  solve. As pointed out in \cite{9679390}, the successive nature of SIC can be used to carry out successive resource allocation, as described  in the following. In particular, ${\rm U}_1$'s power allocation coefficient, $P_{1,1}$, is first optimized, by assuming that the other users' parameters are fixed. Then,   ${\rm U}_m$'s power allocation coefficients, $P_{m,i}$, $1\leq i \leq m$, can be optimized by assuming ${\rm U}_j$'s coefficients are fixed, $j>i$. The optimality of successive resource allocation will be discussed later after the closed-form expressions of the optimal power allocation coefficients are obtained. 

By using this successive resource allocation approach, ${\rm U}_m$'s power allocation coefficients can be obtained by solving the following simplified optimization problem:
 \begin{problem}\label{pb:1.5} 
  \begin{alignat}{2}
\underset{P_{m,i}\geq0   }{\rm{min}} &\quad     \sum^{m}_{i=1}P_{m,i} 
  \label{1.5tst:1}
\\ s.t. &\quad   \sum^{m}_{i=1}R_{m,i}\geq R , \label{1.5tst:2} 
  \end{alignat}
\end{problem}  
where $R=\frac{N_b}{T}$. It is straightforward to show that problem \eqref{pb:1.5} is a convex optimization problem, and hence   can be solved by using off-shelf optimization solvers. However, it is challenging to obtain a closed-form expression for the optimal solution of problem \eqref{pb:1.5}, mainly due to the dynamic nature of hybrid NOMA power allocation. For example, ${\rm U}_m$ might choose to transmit in a few non-consecutive time slots, and keep silent in the other time slots. 

\subsection{Properties  of SISO Hybrid NOMA Power Allocation}
In order to obtain an insightful understanding  of the properties  of  downlink hybrid NOMA,  two     power allocation solutions  for two special cases are provided. The first case  is 
 the pure OMA solution,   $ P_{m,m}^{\rm O}   =\frac{e^R-1}{|h_{m}|^2} $ and $ P_{m,i}^{\rm O}=0$ for $i<m$.  We note that the OMA solution is always a feasible solution of problem \eqref{pb:1.5}, but not necessarily the optimal solution. 
The   solution for the second case of interest is presented in the following lemma.  

\begin{lemma}\label{lemma1}
Without the constraint  of $ P_{m,i}^{\rm H}    \geq 0$,  an   optimal solution for the energy minimization problem shown in \eqref{pb:1.5} is given by
 \begin{align}\label{phybrid}
 P_{m,i}^{\rm H}  
 =&\left(\frac{e^R}{   \prod^{m}_{p=1}\frac{|\bar{h}_{m}|^2}{|\bar{h}_{m}|^2\sum^{m-1}_{j=p}P_{j,p} +1}}\right)^{\frac{1}{m}} \hspace{-1em} - \sum^{m-1}_{j=i}P_{j,i}^{\rm H}  -\frac{ 1}{  |\bar{h}_{m,i}|^2},
\end{align} 
where $|\bar{h}_{m,i}|^2=  \min \left\{
  | {h}_{i}|^2, \cdots,  | {h}_{m}|^2
 \right\}$.
\end{lemma}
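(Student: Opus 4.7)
The plan is to treat problem~\eqref{pb:1.5} as a convex program in the variables $(P_{m,1},\ldots,P_{m,m})$, with the coefficients $P_{j,i}$ for $j<m$ frozen by the preceding stage of successive resource allocation, and then apply the KKT conditions after dropping the non-negativity constraint as allowed by the lemma.

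First I would collapse the inner minimum in $R_{m,i}$. Writing $a_i\triangleq\sum_{j=i}^{m-1}P_{j,i}$, each $R_{m,i}^k$ has the form $f(|h_k|^2)$ with $f(x)=\log\bigl(1+xP_{m,i}/(xa_i+1)\bigr)$, and a direct computation gives
\begin{align}
f'(x)=\frac{P_{m,i}}{\bigl(x(a_i+P_{m,i})+1\bigr)(xa_i+1)}>0.
\end{align}
Hence $f$ is strictly increasing, the minimum in \eqref{rmi} is attained at the smallest channel gain in $\{|h_i|^2,\ldots,|h_m|^2\}$, and $R_{m,i}=\log\bigl(1+|\bar{h}_{m,i}|^2P_{m,i}/(|\bar{h}_{m,i}|^2 a_i+1)\bigr)$. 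Concavity of each $R_{m,i}$ in $P_{m,i}$ then makes problem~\eqref{pb:1.5} convex, so KKT stationarity will be sufficient for optimality of the relaxation.

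Next I would form the Lagrangian with multiplier $\lambda>0$ on the rate constraint and set $\partial L/\partial P_{m,i}=0$. Stationarity gives $1=\lambda|\bar{h}_{m,i}|^2\bigl/\bigl(|\bar{h}_{m,i}|^2(a_i+P_{m,i})+1\bigr)$, which rearranges into the multi-slot water-filling identity
\begin{align}
P_{m,i}^{\rm H}=\lambda-\sum_{j=i}^{m-1}P_{j,i}^{\rm H}-\frac{1}{|\bar{h}_{m,i}|^2},
\end{align}
with a single water level $\lambda$ common to every slot $i$. To pin down $\lambda$, I would substitute the stationarity relation $|\bar{h}_{m,i}|^2(a_i+P_{m,i}^{\rm H})+1=\lambda|\bar{h}_{m,i}|^2$ back into the (active) rate constraint, so that each summand collapses to $\log\bigl(\lambda|\bar{h}_{m,i}|^2/(|\bar{h}_{m,i}|^2 a_i+1)\bigr)$; summing over $i=1,\ldots,m$ and isolating $\lambda$ yields the $m$-th root expression that appears as the first term of~\eqref{phybrid}.

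The main difficulty should be organizational rather than conceptual: one must keep track of the fact that the interference terms $\sum_{j=i}^{m-1}P_{j,i}^{\rm H}$ are constants from the viewpoint of the current stage (they were fixed by earlier applications of successive resource allocation) yet still reappear inside the product that defines $\lambda^m$. A secondary subtlety is that $|\bar{h}_{m,i}|^2$ depends on $i$, so although the water level $\lambda$ is common across slots, the individual $P_{m,i}^{\rm H}$ differ through both the inverse-channel and interference terms; care is needed to verify that assembling the stationarity conditions with the closed-form $\lambda$ reproduces \eqref{phybrid} exactly. Because the non-negativity constraint has been dropped by assumption, optimality of the stationary point follows immediately from convexity.
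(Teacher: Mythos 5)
Your proposal is correct and follows essentially the same route as the paper's proof: collapse the inner minimum in $R_{m,i}$ via monotonicity of $x\mapsto \log\bigl(1+xP_{m,i}/(xa_i+1)\bigr)$, reduce to a convex program without the non-negativity constraints, apply KKT stationarity to obtain the common water level $\lambda$, and recover $\lambda$ from the active rate constraint as the $m$-th root expression. Your explicit derivative computation for the monotonicity step is a slightly more detailed justification than the paper's appeal to $f(x)=ax/(bx+1)$ being increasing, but the argument is otherwise identical.
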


\begin{proof}
See Appendix \ref{proof1}. 
\end{proof}

%We note that $ P_{m,i}^{\rm H}  $ shown in \eqref{phybrid} could be a hybrid NOMA solution, if $ P_{m,i}^{\rm H} >0 $, $i\leq m$, or a pure NOMA solution, if $ P_{m,i}^{\rm H} \neq0 $, $i<m$ and $ P_{m,m}^{\rm H}=0  $. 
Because  Lemma \ref{lemma1} is obtained by omitting the constraint  $ P_{m,i}     \geq 0$, it is possible that $ P_{m,i}^{\rm H} \leq 0 $, i.e.,  Lemma \ref{lemma1} cannot be used for the general case. Instead,  off-shelf optimization solvers should be used for the general case to find the optimal solution of problem \eqref{pb:1.5}. The remainder of the section is to show that $ P_{m,i}^{\rm H}    > 0$ holds in \eqref{phybrid} for the special case, when  the users' channel gains are ordered.  The fact  that $ P_{m,i}^{\rm H}    > 0$ is significant since it means that Lemma \ref{lemma1} yields the optimal solution of problem \eqref{pb:1.5}, and downlink  hybrid NOMA outperforms OMA.     To facilitate the performance analysis, an important    feature of hybrid NOMA power allocation is established  first.

\begin{lemma}\label{lemma2}
Consider  the special case when the users are ordered according to their channel gains (i.e., $|h_m|^2>|h_{m+1}|^2$). If the  downlink   hybrid NOMA power allocation  in \eqref{phybrid} is adopted by ${\rm U}_i$, $1\leq i\leq m$,  the following equality holds
\begin{align}\label{lemmax1}
 \sum^{m}_{i=1}P_{i,1}^{\rm H}=\cdots =  \sum^{m}_{i=m-1}P_{i,m-1}^{\rm H}=P_{m,m}^{\rm H}.
 \end{align}
\end{lemma}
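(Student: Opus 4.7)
My plan is to prove Lemma \ref{lemma2} by a direct algebraic manipulation of the closed-form expression in \eqref{phybrid}, exploiting the simplification that the channel ordering induces. The key observation is that once $|h_1|^2>\cdots>|h_M|^2$, the minimum in the definition $|\bar{h}_{m,i}|^2=\min\{|h_i|^2,\ldots,|h_m|^2\}$ collapses to $|h_m|^2$ \emph{independently of} $i$, so every $i$-indexed channel quantity appearing in \eqref{phybrid} becomes $i$-free.

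First I would note that the leading bracketed factor in \eqref{phybrid}, which I will call $C_m$, depends only on $m$, the target rate $R$, the channel gain $|h_m|^2$, and the previously allocated powers $\{P_{j,p}^{\rm H}\}_{j<m}$, which by the successive nature of the allocation are already fixed when one solves for ${\rm U}_m$. In particular $C_m$ has no free $i$-dependence. The final term $\tfrac{1}{|\bar{h}_{m,i}|^2}$ likewise reduces to $\tfrac{1}{|h_m|^2}$, which again does not depend on $i$. Rearranging \eqref{phybrid} then gives
\begin{align}
\sum_{j=i}^{m} P_{j,i}^{\rm H} \;=\; C_m - \frac{1}{|h_m|^2},
\end{align}
since the left-hand side is exactly $P_{m,i}^{\rm H}+\sum_{j=i}^{m-1}P_{j,i}^{\rm H}$. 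Because the right-hand side does not contain $i$, the accumulated transmit powers in ${\rm T}_1,\ldots,{\rm T}_m$ must all equal the same constant. Specializing to $i=m$ makes the inner sum empty and identifies the common value as $P_{m,m}^{\rm H}=C_m-\tfrac{1}{|h_m|^2}$, which closes the chain of equalities in \eqref{lemmax1}.

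I do not anticipate any serious technical obstacle: the argument is essentially a bookkeeping rearrangement once the ordering assumption is invoked. The only care point is verifying that the bracketed factor $C_m$ in \eqref{phybrid} has no hidden $i$-dependence — this follows once the channel ordering forces every instance of $|\bar{h}_{m,\cdot}|^2$ appearing inside the product over $p$ to equal $|h_m|^2$, leaving only the dummy summation index $p$ and the previously determined powers $P_{j,p}^{\rm H}$, neither of which involves the outer index $i$.
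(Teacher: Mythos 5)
Your proof is correct, and it takes a genuinely different and more direct route than the paper's. The paper proves Lemma \ref{lemma2} by induction on $m$: an explicit base case at $m=2$, and an inductive step that first uses the hypothesis to show $P_{m+1,1}^{\rm H}=\cdots=P_{m+1,m}^{\rm H}$, and then recasts ${\rm U}_{m+1}$'s allocation as an equivalent two-slot problem (merging the first $m$ slots into one of duration $mT$, problem \eqref{pb:a1}) whose KKT solution is checked to satisfy $\sum_{i=1}^{m+1}P_{i,1}^{\rm H}=P_{m+1,m+1}^{\rm H}$. You instead note that the bracketed factor in \eqref{phybrid} is precisely the Lagrange multiplier $\lambda$ of ${\rm U}_m$'s convex problem --- a single scalar involving only the dummy product index $p$ and the previously fixed powers $P_{j,p}^{\rm H}$, $j<m$ --- so moving $\sum_{j=i}^{m-1}P_{j,i}^{\rm H}$ to the left yields $\sum_{j=i}^{m}P_{j,i}^{\rm H}=\lambda-1/|\bar{h}_{m,i}|^2$, and the ordering assumption collapses $|\bar{h}_{m,i}|^2$ to $|h_m|^2$ for every $i\le m$; the $i=m$ instance (empty inner sum) identifies the common value as $P_{m,m}^{\rm H}$. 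This eliminates the induction entirely and is a complete proof of the stated chain of equalities. What the paper's longer route buys is a pair of by-products that are reused elsewhere: the observation that a user's transmit powers are identical across all of its NOMA slots, and the merged two-slot reformulation, both of which reappear in the proof of Lemma \ref{lemma3}. Your argument does not produce those for free, but for Lemma \ref{lemma2} itself it is cleaner; the one point that must be (and is) stated explicitly is that the leading factor of \eqref{phybrid} carries no hidden dependence on the slot index $i$.
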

\begin{proof}
See Appendix \ref{proof2}.
\end{proof}

{\it Remark 2:} The term $\sum^{m}_{i=j}P_{i,j}^{\rm H}$ can be viewed as the accumulated interference in the $j$-th time slot. Lemma \ref{lemma2} reveals  that  the use of downlink  hybrid NOMA ensures that the accumulated interference in different time slots is identical. A   conclusion 
similar to Lemma \ref{lemma2} has been previously reported  for hybrid NOMA uplink transmission \cite{9679390,9340353,Zhiguo_MEC1}. 

With the help of Lemma \ref{lemma2}, the optimality  of the  downlink  hybrid NOMA power allocation in Lemma \ref{lemma1} can be established as shown in the following lemma. 

\begin{lemma}\label{lemma3}
For  the considered special case with ordered channel gains,  the solution shown in Lemma \ref{lemma1} is   an optimal solution of problem \eqref{pb:1.5}. 
\end{lemma}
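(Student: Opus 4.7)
The plan is to show that the closed-form expression in Lemma 1, derived by relaxing the nonnegativity constraint $P_{m,i}^{\rm H} \geq 0$, actually satisfies $P_{m,i}^{\rm H} \geq 0$ for every $i\leq m$ when the channel gains are ordered. Since problem (2) is convex, a point that satisfies the KKT conditions of the relaxation and also lies in the feasible set is automatically a global optimum of the original constrained problem.

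I would proceed by induction on $m$. The base case $m=1$ is immediate, as $P_{1,1}^{\rm H} = (e^R-1)/|h_1|^2 > 0$. For the inductive step, I assume $P_{j,i}^{\rm H} \geq 0$ for all $j<m$, so that Lemma 1 really gives the optimum at lower levels and Lemma 2 is applicable to the first $m-1$ users. Under the ordering $|h_1|^2>\cdots>|h_M|^2$ one has $|\bar h_{m,i}|^2 = |h_m|^2$ for every $i \leq m$, so introducing the shorthand $B_p := 1 + |h_m|^2 \sum_{j=p}^{m-1} P_{j,p}^{\rm H}$, the formula in Lemma 1 rearranges to
\begin{align}
P_{m,i}^{\rm H} = \frac{1}{|h_m|^2}\left[e^{R/m}\bigl(\textstyle\prod_{p=1}^m B_p\bigr)^{1/m} - B_i\right],
\end{align}
and the claim $P_{m,i}^{\rm H} \geq 0$ becomes $e^R \prod_{p=1}^m B_p \geq B_i^m$. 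Invoking Lemma 2 at level $m-1$, the sum $\sum_{j=p}^{m-1} P_{j,p}^{\rm H}$ is constant in $p\in\{1,\ldots,m-1\}$ and equals $P_{m-1,m-1}^{\rm H}$, so $B_1=\cdots=B_{m-1}=1+|h_m|^2 P_{m-1,m-1}^{\rm H}$ while $B_m=1$. The case $i=m$ is then trivial, and for $i<m$ the inequality collapses to the single condition $e^R \geq 1 + |h_m|^2 P_{m-1,m-1}^{\rm H}$.

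The main obstacle, and the only place where the channel ordering is essential, is to verify this last inequality. I would use the fact that ${\rm U}_{m-1}$ transmits only in slots ${\rm T}_1,\ldots,{\rm T}_{m-1}$ and its rate constraint is tight at the successive-allocation optimum, i.e.\ $\sum_{i=1}^{m-1} R_{m-1,i} = R$. Since among the first $m-1$ users only ${\rm U}_{m-1}$ is active in slot ${\rm T}_{m-1}$, the single-slot rate reduces to $R_{m-1,m-1} = \log(1 + |h_{m-1}|^2 P_{m-1,m-1}^{\rm H}) \leq R$, yielding $|h_{m-1}|^2 P_{m-1,m-1}^{\rm H} \leq e^R - 1$. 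The ordering $|h_{m-1}|^2 > |h_m|^2$ then gives $|h_m|^2 P_{m-1,m-1}^{\rm H} < e^R - 1$, which closes the induction. Convexity of problem (2) finally promotes the resulting feasible KKT point into a global optimum, proving Lemma 3.
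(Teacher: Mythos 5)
Your proposal is correct and follows essentially the same route as the paper's Appendix~C: reduce optimality to feasibility of the relaxed KKT point from Lemma~\ref{lemma1}, then establish $P_{m,i}^{\rm H}\geq 0$ by induction, invoking Lemma~\ref{lemma2} to equalize the accumulated interference across slots so that the positivity claim collapses to the single condition $e^{R}\geq 1+|h_m|^2P_{m-1,m-1}^{\rm H}$, which you close via the tight rate constraint and the channel ordering exactly as the paper does in \eqref{cdsx1}--\eqref{cfde1}. The only cosmetic difference is that the paper re-derives the relevant power expression through an auxiliary merged-slot problem (problem~\eqref{pb:a2}), whereas you read the same inequality directly off the closed form of Lemma~\ref{lemma1}.
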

\begin{proof}
See Appendix \ref{proof3}.
\end{proof}
We note that Lemma \ref{lemma3} is not sufficient to show the superiority of  downlink hybrid NOMA over OMA, since the OMA solution could also be an optimal solution of problem \eqref{pb:1.5}. 
\begin{lemma}\label{lemma4}
For  the considered special case with ordered channel gains,  there exists a single optimal solution for problem \eqref{pb:1.5}. 
\end{lemma}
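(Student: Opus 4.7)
The plan is to exploit the strict concavity of the rate constraint together with the linearity of the energy objective, which forces a minimizer to be unique. First I would observe that under the ordered channel assumption $|h_i|^2>\cdots>|h_m|^2$, the minimum defining $R_{m,i}$ in \eqref{rmi} is always attained at $k=m$. This follows from a one-line check that $\log\bigl(1+\tfrac{xP}{xC+1}\bigr)$ is monotonically increasing in $x>0$ for $P>0$ and $C\geq 0$, so the weakest channel $|h_m|^2$ determines $R_{m,i}$. This observation is essentially already used inside the proofs of Lemma \ref{lemma1} and Lemma \ref{lemma2}, and I would simply invoke it.

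Second, I would exploit the structure of the successive allocation. In the stage-$m$ subproblem all $P_{j,i}$ with $j<m$ are fixed constants, so after the reduction $R_{m,i}=R_{m,i}^m$, each rate becomes a function of $P_{m,i}$ alone, namely $\log\bigl(1+\tfrac{|h_m|^2 P_{m,i}}{|h_m|^2 C_i+1}\bigr)$ with $C_i=\sum_{j=i}^{m-1}P_{j,i}$ a constant. This is strictly concave in $P_{m,i}$ (second derivative strictly negative). Summing over $i$ gives a separable sum of strictly concave functions in distinct variables, so $\sum_{i=1}^{m} R_{m,i}$ is jointly strictly concave in $(P_{m,1},\ldots,P_{m,m})$, and problem \eqref{pb:1.5} becomes the minimization of a linear function over a strictly convex feasible set.

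Third, I would close with a standard contradiction argument. The objective $\sum_i P_{m,i}$ is strictly increasing in every coordinate, so any optimal solution must activate the rate constraint with equality. If two distinct optimal solutions $\mathbf{P}^{(1)}$ and $\mathbf{P}^{(2)}$ existed, their midpoint $\bar{\mathbf{P}}$ would attain the same (optimal) energy by linearity but, by strict concavity of $\sum_i R_{m,i}$, would satisfy $\sum_i R_{m,i}(\bar{\mathbf{P}})>R$ strictly. This slack allows one to decrease some $\bar P_{m,i}$ by a small positive amount while preserving $\sum_i R_{m,i}\geq R$ and $\bar P_{m,i}\geq 0$, producing a feasible point with strictly smaller energy, contradicting optimality of $\bar{\mathbf{P}}$. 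Hence $\mathbf{P}^{(1)}=\mathbf{P}^{(2)}$.

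I do not expect any serious obstacle: the ordered-channel hypothesis is precisely what collapses the inner minimum into a single strictly concave summand per time slot, after which uniqueness is a routine consequence of strict convexity versus a linear objective. The only point that requires mild care is ensuring the argument is consistent with the successive allocation framework, i.e., that uniqueness at each stage is enough to conclude uniqueness of the power vector allocated to ${\rm U}_m$; since earlier stages fix the relevant constants $C_i$ before the stage-$m$ problem is solved, this is automatic.
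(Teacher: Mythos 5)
Your proof is correct, but it takes a genuinely different route from the paper's. The paper proves Lemma \ref{lemma4} by induction on $m$: it invokes Lemma \ref{lemma2} to simplify the KKT system, argues that any second optimum would have to set some $P_{m,i}=0$, and then shows that this forces the algebraic identity $\gamma_m P_{m-1,m-1}+1=e^R$, which it dismisses as a probability-zero event for independent continuous fading --- so the paper's uniqueness is an \emph{almost-sure} statement over the channel realization. Your argument is deterministic and more elementary: once the inner minimum in \eqref{rmi} collapses (correctly, since $\frac{xP}{xC+1}$ is increasing in $x$ and the weakest gain among $\{i,\dots,m\}$ is $|h_m|^2$ under the ordering), the constraint function is a separable sum of strictly concave univariate functions of the $P_{m,i}$ (the earlier-stage powers being fixed constants in the successive allocation), hence jointly strictly concave, and the midpoint-plus-perturbation contradiction closes the argument with no induction, no appeal to Lemma \ref{lemma2}, and no measure-zero caveat. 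Indeed, your argument does not really need the ordering at all for uniqueness: for fixed earlier-stage powers the minimizing index $k$ in $R_{m,i}=\min_k R_{m,i}^k$ is determined by the channel gains alone, so each $R_{m,i}$ is strictly concave in $P_{m,i}$ even in the unordered case (the ordering is only needed elsewhere, for the positivity in Lemma \ref{lemma3}). Two small points of care: the phrase ``strictly convex feasible set'' is loose because the orthant constraints $P_{m,i}\geq 0$ contribute flat faces, but your explicit contradiction argument does not rely on it; and when you decrease a coordinate of the midpoint you must pick one that is strictly positive, which exists because $R>0$ rules out the all-zero vector. Your version is arguably the stronger and cleaner proof; what the paper's KKT route buys is an explicit identification of the degenerate configuration (which, on inspection, is precisely the case where the hybrid solution of Lemma \ref{lemma1} collapses onto the OMA point, so no genuine second optimum arises even there).
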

\begin{proof}
See Appendix \ref{proof4x}.
\end{proof}

Based on Lemmas  \ref{lemma3} and \ref{lemma4},   the following corollary can be obtained straightforwardly. 
 \begin{corollary}\label{corollary0}
For  the considered special case with ordered channel gains,   downlink   hybrid NOMA always outperforms   OMA. 
 \end{corollary}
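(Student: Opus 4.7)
The plan is to derive Corollary \ref{corollary0} as an immediate consequence of Lemmas \ref{lemma3} and \ref{lemma4}, together with the feasibility of the OMA allocation. First, I would recall the explicit OMA assignment $P_{m,m}^{\rm O}=(e^R-1)/|h_m|^2$ with $P_{m,i}^{\rm O}=0$ for $i<m$; as noted just before Lemma \ref{lemma1}, this is always a feasible (though not necessarily optimal) point of problem \eqref{pb:1.5}, because it is nonnegative and achieves rate $R$ by using the $m$-th slot alone.

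Next, I would invoke Lemma \ref{lemma3}, which under the ordered-channel assumption certifies that the closed-form hybrid allocation $\{P_{m,i}^{\rm H}\}$ of Lemma \ref{lemma1} is an optimal solution of problem \eqref{pb:1.5}. The key auxiliary observation is that each $P_{m,i}^{\rm H}$ is strictly positive for $1\le i\le m$. This is exactly what the accumulated-interference identity of Lemma \ref{lemma2} enforces: since $|h_i|^2>|h_{i+1}|^2$, the common value $\sum_{i=j}^{m}P_{i,j}^{\rm H}$ cannot be absorbed into any single slot, and the expression \eqref{phybrid} remains bounded away from zero. Consequently, for every $m\ge 2$, the hybrid allocation differs from the OMA allocation in the components $i<m$, where the former is strictly positive and the latter is exactly zero.

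Finally, I would close the argument using Lemma \ref{lemma4}, which states that problem \eqref{pb:1.5} admits a unique optimal solution. Combining uniqueness with the optimality of $\{P_{m,i}^{\rm H}\}$ forces the OMA allocation — feasible but distinct from the unique minimizer for $m\ge 2$ — to achieve a strictly larger objective value, i.e., $\sum_{i=1}^m P_{m,i}^{\rm O}>\sum_{i=1}^m P_{m,i}^{\rm H}$, and hence $E_m^{\rm O}>E_m^{\rm H}$ for every $m\ge 2$, while trivially $E_1^{\rm O}=E_1^{\rm H}$ since ${\rm U}_1$ uses only ${\rm T}_1$ in both schemes. Thus the vector $\mathbf{E}^{\rm H}$ Pareto-dominates $\mathbf{E}^{\rm O}$, which is precisely the meaning of hybrid NOMA outperforming OMA in the sense of problem \eqref{pb:1}.

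The only delicate step, and the one I would flag as the crux, is the strict positivity $P_{m,i}^{\rm H}>0$: without it the hybrid and OMA allocations could in principle coincide and the uniqueness-plus-feasibility argument would collapse to a tie rather than a strict improvement. Everything else is bookkeeping built on the three preceding lemmas.
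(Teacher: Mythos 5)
Your argument is correct and takes essentially the same route as the paper, which obtains the corollary directly from Lemma \ref{lemma3} (optimality of the hybrid allocation), Lemma \ref{lemma4} (uniqueness of the optimum), and the feasibility of the distinct OMA point. The only minor imprecision is attributing the strict positivity $P_{m,i}^{\rm H}>0$ to Lemma \ref{lemma2}; in the paper that positivity is the substance of the proof of Lemma \ref{lemma3} itself (an induction in which Lemma \ref{lemma2} is only a tool), so once you invoke Lemma \ref{lemma3} you already have the distinctness you need.
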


{\it Remark 3:} Corollary \ref{corollary0} shows that there is a unique difference between uplink and downlink hybrid NOMA. In particular, for the scenario considered in Corollary  \ref{corollary0}, uplink OMA   outperforms   uplink hybrid NOMA, as illustrated by the following two-user example. Similar to problem \eqref{pb:1}, an energy minimization problem for hybrid NOMA uplink transmission can be formulated as follows:
  \begin{problem}\label{pb:1c} 
  \begin{alignat}{2}
\underset{ }{\rm{min}} &\quad      P_{2,1}+P_{2,2} \label{1tst:1c}
\\ s.t. &\quad    \log\left(1+\frac{|h_2|^2P_{2,1}}{|h_1|^2P_{1,1}+1}
\right) +\log\left(
1+|h_2|^2P_{2,2}
\right)\geq R\label{1tst:2c} .
  \end{alignat}
\end{problem} 
By applying the Karush–Kuhn–Tucker (KKT) conditions, the  optimal solutions of problem \eqref{pb:1c}  are give by \cite{Boyd}
\begin{align}
P_{2,1} =& \left(\frac{e^R\left(|h_1|^2P_{1,1}+1\right)}{|h_2|^2 }\right)^{\frac{1}{2}}-\frac{|h_1|^2P_{1,1}+1}{|h_2|^2},
\\\nonumber
P_{2,2} = &  \left(\frac{e^R\left(|h_1|^2P_{1,1}+1\right)}{|h_2|^2 }\right)^{\frac{1}{2}}-\frac{ 1}{|h_2|^2}. 
\end{align}
It is straightforward to show that $P_{2,1}=0$ since
\begin{align}
P_{2,1} =& \left(\frac{e^Re^R}{|h_2|^2}\right)^{\frac{1}{2}}-\frac{e^R}{|h_2|^2} =0,
\end{align}
i.e., ${\rm U}_2$ chooses the OMA mode, where the last step follows from the fact that $P_{1,1}$ needs to satisfy the following equality:  $\log\left(|h_1|^2P_{1,1}+1\right)=R$.   As pointed out in \cite{9679390,9340353,Zhiguo_MEC1}, uplink hybrid NOMA can achieve a significant performance gain over OMA, if the durations of different   time slots are different. This conclusion does not contradict   the one made in this paper, since  Corollary  \ref{corollary0} is obtained by assuming   that the durations of all time slots are the same.   

 Lemma \ref{lemma3} shows that   the solution provided  in Lemma \ref{lemma1} is the optimal solution of problem \eqref{pb:1.5}. By using this conclusion and     following   steps similar to those in the proof of     \cite[Lemma 6]{9679390}, the following corollary can be obtained. 
\begin{corollary}
For  the considered special case with ordered channel gains,  the downlink  hybrid NOMA power allocation solution obtained in Lemma \ref{lemma1} is a Pareto optimal solution of the multi-objective optimization problem shown in \eqref{pb:1}. 
\end{corollary}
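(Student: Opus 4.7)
The plan is to prove this by contradiction, exploiting the successive structure induced by SIC together with Lemmas \ref{lemma3} and \ref{lemma4}. Suppose there is a feasible allocation $\{P'_{m,i}\}$ that Pareto-dominates $\{P^{\rm H}_{m,i}\}$, i.e., $E'_m \leq E^{\rm H}_m$ for every $m$ with strict inequality for at least one index $m^\star$. The argument then proceeds by induction on the user index $m$, relying on the key structural fact that, under the descending SIC order, the interference seen while decoding ${\rm U}_m$'s signal in ${\rm T}_i$ contains only $P_{j,i}$ with $j < m$. Consequently, once $P_{j,i}$ are fixed for all $j < m$, the problem of minimizing $E_m$ subject to ${\rm U}_m$'s rate constraint is exactly of the form \eqref{pb:1.5}.

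For the base case $m=1$, note that ${\rm U}_1$ is served only in ${\rm T}_1$, so $E_1 = P_{1,1}T$, and after SIC the rate constraint \eqref{1tst:2} reduces to $|h_1|^2 P_{1,1} \geq e^R-1$, which forces $P_{1,1} \geq P^{\rm H}_{1,1}$. Combined with $E'_1 \leq E^{\rm H}_1$ from Pareto dominance, this yields $E'_1 = E^{\rm H}_1$ and hence $P'_{1,1} = P^{\rm H}_{1,1}$. For the inductive step, assume $P'_{j,i} = P^{\rm H}_{j,i}$ for all $j < m$ and $i \leq j$. Then, with these interference terms frozen at their values under $\{P^{\rm H}_{j,i}\}$, the minimization of $E_m$ subject to \eqref{1tst:2} is exactly problem \eqref{pb:1.5} for ${\rm U}_m$. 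By Lemma \ref{lemma3} the Lemma \ref{lemma1} solution attains this minimum, and by Lemma \ref{lemma4} it is the unique minimizer; hence $E'_m \geq E^{\rm H}_m$. Pareto dominance then forces $E'_m = E^{\rm H}_m$ and, by uniqueness, $P'_{m,i} = P^{\rm H}_{m,i}$ for all $i \leq m$, closing the induction.

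Carrying the induction to $m=M$ gives $\{P'_{m,i}\} = \{P^{\rm H}_{m,i}\}$ everywhere, which contradicts the strict inequality $E'_{m^\star} < E^{\rm H}_{m^\star}$ assumed in the Pareto-dominance hypothesis. Therefore no dominating allocation exists and $\{P^{\rm H}_{m,i}\}$ is Pareto optimal for problem \eqref{pb:1}.

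The main delicate point, I expect, is justifying the decoupling at the inductive step: one must verify that with $\{P^{\rm H}_{j,i}\}_{j<m}$ frozen, the residual problem in the variables $\{P_{m,i}\}_{i\leq m}$ is identical in objective and constraint to \eqref{pb:1.5} so that Lemmas \ref{lemma3} and \ref{lemma4} apply verbatim; the powers $\{P_{j,i}\}_{j>m}$ never enter ${\rm U}_m$'s constraint thanks to the descending SIC order, which is precisely what makes the argument go through. Once this is laid out, the remainder is bookkeeping that mirrors the proof of \cite[Lemma 6]{9679390} referenced in the paper.
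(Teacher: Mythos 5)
Your proof is correct and follows essentially the argument the paper intends (the paper itself only sketches it by citing the analogous result in \cite{9679390}): freeze the lower-indexed users' powers, observe that the descending SIC order confines ${\rm U}_m$'s interference to $P_{j,i}$ with $j<m$, and induct using Lemma \ref{lemma3} for optimality. You also correctly identify that Lemma \ref{lemma4} (uniqueness) is what actually closes the induction --- without it, a dominating allocation could match $E^{\rm H}_m$ with a different power split, altering the interference seen by later users --- which is a subtlety the paper's one-line justification glosses over.
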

%%%%%%%%%%%%%%%%%%%%%%%%%%%%%%%%%%%%%%%%%%%%%%%%
\section{Downlink MISO Hybrid NOMA Transmission}\label{section miso}
In this section, the  downlink SISO hybrid NOMA scheme developed in the previous section is extended to the MISO scenario. Unlike the SISO network, the legacy MISO network is based on both TDMA and SDMA.  In particular, it is assumed that there exist two groups of users, denoted by $\mathcal{G}_1$ and $\mathcal{G}_2$, respectively, where     the users in each group   are served simultaneously via SDMA, and TDMA is used to avoid inter-group interference, i.e., the users in $\mathcal{G}_1$ are scheduled to be served  earlier than those in $\mathcal{G}_2$. Similar to the SISO case, a user in $\mathcal{G}_2$ can  have access to  the   time slots which belong to $\mathcal{G}_1$. Unlike in the SISO case, the base station also needs to decide whether to design  new beamforming vectors for the users in   $\mathcal{G}_2$ during the   time slots which belong to the users in $\mathcal{G}_1$ \cite{9693536}. In order to avoid any disruptions to the legacy network,     beam sharing is used, i.e.,    the spatial beams preconfigured for the users in $\mathcal{G}_1$ are used to serve the users in $\mathcal{G}_2$. 

The importance of near-field communications in   future wireless networks motivates the use of   the near-field channel model    as an illustrative  example. We note that for conventional far-field beamforming, the concept of beam-sharing is straightforward, since many users  can share the same beam-steering vector and hence can be served by a single far-field beam \cite{mojobabook}. However,   the accurate   beamfocusing    in  near-field communications can make beam sharing difficult, which is another motivation for using near-field beamforming for  the feasibility study of hybrid NOMA in  downlink  MISO systems.
 \vspace{-1em}
 
 \subsection{Near-Field Communication System Model}
Consider a near-field MISO downlink   network, where a  base station  is equipped with  an $N$-antenna uniform linear array (ULA). There are $M$ and $K$ single-antenna users in $\mathcal{G}_1$ and and $\mathcal{G}_2$, respectively, which are denoted by    ${\rm U}_{m}^{\rm G1}$ and ${\rm U}_k^{\rm G2}$, respectively.  The scenario studied in  \cite{Dingnfhybrid1} is a special case with $K=1$.     Similar to \cite{Dingnfhybrid1},     the ULA is assumed  to be placed at the center of a  $2$-dimensional  plane. By using Cartesian coordinates,   the locations of  ${\rm U}_m^{\rm G1}$, ${\rm U}_k^{\rm G2}$,  the center of the ULA, and   the $n$-th element of the ULA are denoted by  ${\boldsymbol \psi}_m^{\rm G1}$, ${\boldsymbol \psi}_k^{\rm G2}$,  ${\boldsymbol \psi}_0$, and ${\boldsymbol \psi}_n$, respectively. 

 It is assumed that ${\rm U}_k^{\rm G1}$'s distance to the base station is much smaller than the Rayleigh distance. Therefore, the resolution of the beamformers of the  users in $\mathcal{G}_1$   is almost perfect, i.e., the users' channel vectors are almost orthogonal to each other,  which makes  the implementation of beamfocusing possible for the  users in $\mathcal{G}_1$  \cite{10123941,dingreso}. On the other hand,   the ${\rm U}_k^{\rm G2}$ could be   far-field users, or    near-field users whose distances to the base station are larger than those of the  users in $\mathcal{G}_1$.

 \vspace{-1em}
\subsection{An MSIO OMA  Benchmark }
It is assumed that ${\rm U}_k^{\rm G1}$ and  ${\rm U}_k^{\rm G2}$ cannot be    simultaneously served by SDMA, which   motivates the considered OMA benchmarking scheme   based on the combination of SDMA and TDMA\footnote{This assumption can be justified if  ${\rm U}_m^{\rm G1}$ and  ${\rm U}_k^{\rm G2}$ are near-field and far-field users, respectively, where there will be strong co-channel interference if the two types of users are served simultaneously. For the case that  all users are near-field users,  the assumption can be still justified, if  the resolution between  the channels of ${\rm U}_m^{\rm G1}$ and  ${\rm U}_k^{\rm G2}$ is not perfect, i.e., their channels are not perfectly  orthogonal to each other \cite{10123941,yuanweinear, dingreso}.  }.   In particular, the OMA transmission consists of two phases.  During the first phase which consists of     $M$ time slots, the base station serves the $M$   users   in $\mathcal{G}_1$ simultaneously via SDMA, and     the observation at ${\rm U}_m^{\rm G1}$ is given by 
 \begin{align}
 y^{\rm G1}_m = \mathbf{h}_m^H \sum^{M}_{m=1}\sqrt{P^{\rm G1}}\mathbf{w}^{\rm G1}_ms_m^{\rm G1}+n_m^{\rm G1},
 \end{align} 
where $n_m^{\rm G1}$ denotes the additive white Gaussian noise with normalized power,  $s_m^{\rm G1}$ denotes the symbol intended for ${\rm U}_m^{\rm G1}$, each      users in $\mathcal{G}_1$ is assumed to have the same transmit power, denoted by $P^{\rm G1}$, $\mathbf{w}^{\rm G1}_m$ denotes the user's beamforming vector,   the spherical channel model is adopted, i.e.,   $
 \mathbf{h}_m = \sqrt{N}\alpha_m^{\rm G1} \mathbf{b}\left({\boldsymbol \psi}^{\rm G1}_m\right)$, 
  $\mathbf{b}\left({\boldsymbol \psi} \right)=\frac{1}{\sqrt{N}}\begin{bmatrix} 
 e^{-j\frac{2\pi }{\lambda_w}\left| {\boldsymbol \psi}  -{\boldsymbol \psi}_1\right|} &\cdots &  e^{-j\frac{2\pi }{\lambda_w}\left| {\boldsymbol \psi}  -{\boldsymbol \psi}_N\right|}
 \end{bmatrix}^T$, $\lambda_w$ denotes the wavelength, and $\alpha_m^{\rm G1} = \frac{\lambda_w}{4\pi   \left| {\boldsymbol \psi}^{\rm G1}_m -{\boldsymbol \psi}_0\right|}$ \cite{yuanweinear,devar,9738442,Eldar2}.   Similar to the previous section, the base station is assumed to have access to the users' CSI. 
 Therefore,   ${\rm U}_m^{\rm G1}$'s data rate in OMA can be expressed as: 
 \begin{align}
  \bar{R}_m^{\rm G1 } = \log\left(1+\frac{P^{\rm G1}  |\mathbf{h}_m^H\mathbf{w}^{\rm G1}_m|^2}{P^{\rm G1} \sum^{M}_{i=1,i\neq m} |\mathbf{h}_m^H\mathbf{w}^{\rm G1}_i|^2+1}\right).
 \end{align}

The second phase of OMA transmission consists of    $K$ time slots, where the  users in $\mathcal{G}_2$, ${\rm U}_k^{\rm G2}$, are served simultaneously also via SDMA, and their data rates, denoted by $\bar{R}_k^{\rm G2 } $, are  similar to those of the      users in $\mathcal{G}_1$, i.e., 
\begin{align} \label{rk2}
\bar{R}_k^{\rm G2 }  =\log\left(1+\frac{P^{\rm G2}_{k,2} |\mathbf{g}_k^H\mathbf{w}^{\rm G2}_k|^2}{  \sum^{K}_{i=1,i\neq k} P^{\rm G2}_{i,2} |\mathbf{g}_k^H\mathbf{w}^{\rm G2}_i|^2+1}\right),
\end{align}
 where $\mathbf{g}_k$ and $\mathbf{w}_k^{\rm G2}$ are defined similar to their   counterparts for the  users in $\mathcal{G}_1$, and $P^{\rm G2}_{k,2}$ denotes the transmit power of ${\rm U}_k^{\rm G2}$ during the second phase.  Because  the use of  downlink hybrid NOMA does not affect the  users in $\mathcal{G}_1$,   we will focus on  optimizing the parameters for the    users in $\mathcal{G}_2$, i.e., $P^{\rm G1}$ is fixed and $P^{\rm G2}_{k,2}$ is to be optimized.  
 
 The assumption that  the  users in $\mathcal{G}_1$ are   very close to the base station makes beamfocusing possible to these users. In particular, the direction vector of a   user's spherical channel vector can be used as the user's beamforming vector, i.e., $\mathbf{w}_m^{\rm G1}=\mathbf{b}\left({\boldsymbol \psi}^{\rm G1}_m\right)$, and $|\mathbf{h}_i^H\mathbf{w}_m^{\rm G1}|^2\approx 0$, for $i\neq m$, since the resolution for the near-field users very close to the base station is almost perfect, i.e., $|\mathbf{b}\left({\boldsymbol \psi}^{\rm G1}_i\right)^H\mathbf{b}\left({\boldsymbol \psi}^{\rm G1}_m\right)|^2\approx 0$ \cite{dingreso}. Alternatively,   zero-forcing   beamforming vector can also be used by the  users in $\mathcal{G}_1$, which can ensure that $|\mathbf{h}_i^H\mathbf{w}_m^{\rm G1}|^2$, $i\neq m$, is strictly zero, i.e., there is no inter-beam interference.  For the  users in $\mathcal{G}_2$, conventional zero-forcing beamforming is used, since beamfocusing cannot be applied  due to their large distances to the base station. 
 
%
% 
%A conventional design of $\mathbf{w}_m$ is to apply the zero-forcing (ZF) approach, which means that $\mathbf{W} = \mathbf{H}\left(\mathbf{H}^H\mathbf{H}\right)^{-1}\mathbf{D}$, where $\mathbf{W}=\begin{bmatrix}\mathbf{w}_1&\cdots &\mathbf{w}_M
%\end{bmatrix}$,   $\mathbf{H}=\begin{bmatrix}\mathbf{h}_1&\cdots &\mathbf{h}_M
%\end{bmatrix}$  and $\mathbf{D}=\text{diag}\left(\text{diag}\left(\left(\mathbf{H}^H\mathbf{H}\right)^{-1}\right)\right)^{-\frac{1}{2}}$.
%If the resolution of near-field beamforming is perfect, i.e., $\Delta\triangleq \left|\mathbf{b}\left({\boldsymbol \psi}^{\rm U}_m \right)^H\mathbf{b}\left({\boldsymbol \psi}^{\rm U}_n \right)\right|^2\rightarrow 0$ for $m\neq n$,  the beamfocusing based beamforming can also be used, i.e., $\mathbf{w}_m=\mathbf{b}(\boldsymbol \phi^{\rm U}_{m})$.  For  ${\rm U}_0$, $\mathbf{w}_0$ can be simply be  $\mathbf{w}_0=\mathbf{b}(\boldsymbol \phi^{\rm U}_{0})$.  We note that ${\rm U}_0$ might be a far-field user, where $\mathbf{w}_0=\mathbf{b}(\boldsymbol \phi^{\rm U}_{0})$ is still applicable  since the far-field planar channel model is an approximation of the  more accurate spherical model. 

  \vspace{-1em}  
 \subsection{Downlink MISO Hybrid NOMA Transmission }
By using downlink hybrid NOMA, the ${\rm U}_k^{\rm G2}$ can have access to the time slots in the first phase, i.e.,  the base station broadcasts the following signals during the first $M$ time slots:
  \begin{align}
 \mathbf{x}^{\rm NOMA}=\sum^{M}_{m=1}\mathbf{w}_m^{\rm G1}\left(\sqrt{P^{\rm G1}}s_m^{\rm G1}+\sum^{K}_{k=1}s_{m,k}\sqrt{\tilde{P}^{\rm G2}_{m,k} } s_{m,k}^{\rm G2}\right),
 \end{align}
 where $s_{m,k}^{\rm G2}$ denotes the signal sent to ${\rm U}^{\rm G2}_k$   on $\mathbf{w}_m^{\rm G1}$, $s_{m,k}$ is an indicator, i.e., $s_{m,k}=1$ if  ${\rm U}^{\rm G2}_k$ uses $\mathbf{w}^{\rm G1}_m$, otherwise $s_{m,k}=0$,  and $  \tilde{P}^{\rm G2}_{m,k}$ is a   power allocation coefficient to be optimized. % Given the fact that $\mathbf{w}_m^{\rm G1}$ is tailored to   legacy user ${\rm U}^{\rm G1}_m$, the effective channel gain of ${\rm U}^{\rm G1}_m$ on $\mathbf{w}_m^{\rm G1}$ is expected to be strong, which means that on $\mathbf{w}_m^{\rm G1}$, the legacy user can be viewed as the strong user and the additional users are treated as   weak users in conventional NOMA. 
 
For illustration purposes, it is assumed that $M>K$, and each user in $\mathcal{G}_2$  selects a single beam. Denote the index of the beam used by ${\rm U}^{\rm G2}_k$ by $i^k$, e.g., $i^k=2$ means that  
  on beam $\mathbf{w}_{2}^{\rm G1}$,   ${\rm U}^{\rm G2}_k$ is active. Recall that during the first phase,  ${\rm U}^{\rm G1}_{i^k}$  is scheduled to be served exclusively on $\mathbf{w}_{i^k}^{\rm G1}$ in TDMA. Therefore, similar to the previous section, ${\rm U}^{\rm G1}_{i^k}$   carries  out SIC by first decoding the signal for ${\rm U}^{\rm G2}_k$ with the following data rate: 
 \begin{align}\label{r1m}
R_{{i^k}\rightarrow k} ^{\rm G1}=  \log\left(1+\frac{ |\mathbf{h}_{i^k}^H \mathbf{w}_{i^k}^{\rm G1}|^2
  {P}^{\rm G2}_{k,1}
}{ {\rm I}_{i^k}^{\rm G1}+ 1}\right),
 \end{align}  
 where  $ {P}^{\rm G2}_{k,1}=\tilde{P}^{\rm G2}_{i^k,k}$, and
 \begin{align}
  {\rm I}_{i^k}^{\rm G1}= P^{\rm G1} \sum^{M}_{m=1 } |\mathbf{h}_{i^k}^H\mathbf{w}^{\rm G1}_m|^2+  
 \sum _{ j\neq k} | \mathbf{h}_{i^k}^H \mathbf{w}_{i^j}^{\rm G1}|^2{P}^{\rm G2} _{j,1}.
\end{align}
 We note that in the considered MISO context, this SIC decoding order can also be justified by the fact that 
  the effective channel gain of ${\rm U}^{\rm G1}_{i^k}$, $ |\mathbf{h}_{i^k}^H \mathbf{w}_{i^k}^{\rm G1}|^2$,  is strong  since  $\mathbf{w}_{i^k}^{\rm G1}$ is tailored to the channel vector of ${\rm U}^{\rm G1}_{i^k}$.

Assume that all the  users in $\mathcal{G}_2$  have the same target data rate, denoted by $R$.  We note that if   $R^{\rm G1}_{{i^k}\rightarrow k}\geq R$,   ${\rm U}^{\rm G1}_{i^k}$  can successfully remove its partner's signal and decode its own signal in the same manner  as   in OMA, i.e., the data rates for the  users in $\mathcal{G}_1$ in NOMA and OMA are the same. 

On the other hand, during the first phase, ${\rm U}^{\rm G2}_k$   decodes its signal directly with the following data rate:
\begin{align}
R_{k} ^{\rm G2}=   \log\left(1+\frac{ | \mathbf{g}_k^H \mathbf{w}_{i^k}^{\rm G1}  |^2  {P}^{\rm G2}_{k,1}    }{  {\rm I}_k^{\rm G2}+ 1}\right) ,
\end{align}
where
\begin{align}
{\rm I}_k^{\rm G2}=&{P^{\rm G1}}
\sum^{M}_{m=1} \left|\mathbf{g}_k^H\mathbf{w}_m^{\rm G1}  
\right|^2 +
 \sum _{ j\neq k} |\mathbf{g}_k^H \mathbf{w}_{i^j}^{\rm G1}|^2  {P}^{\rm G2}_{j,1} 
 .
\end{align}
Therefore, in downlink MISO hybrid NOMA, the achievable data rate for ${\rm U}^{\rm G2}_k$ is  given by
\begin{align}
R_{k,1}^{\rm G2}=&\min\left\{
R_{k}^{\rm G2} , R^{\rm G1}_{{i^k} \rightarrow k} 
\right\}. 
\end{align}

During the second phase which consists of $K$ time slots, SDMA can be employed again to support the $K$    users in $\mathcal{G}_2$, which  means that, during the second phase,  the data rate of 
${\rm U}^{\rm G2}_k$ in downlink hybrid NOMA, denoted by $R^{\rm G2}_{k,2} $, is the same as that for OMA, i.e.,  
$R_{k,2}^{\rm G2}=\bar{R}_k^{G2}$ 
  shown in \eqref{rk2}.

Recall that during the first phase, the users in $\mathcal{G}_2$ can use the beams preconfigured to the users in $\mathcal{G}_1$. Hence,   beam selection, i.e., how to choose beam $\mathbf{w}_{i^k}^{\rm G1}$ for ${\rm U}^{\rm G2}_k$, is crucial for the performance of hybrid NOMA. If beamfocusing is used for beamforming, a   user's beamforming vector is aligned to its channel vector. According to \cite{10123941,dingreso}, the resolution of near-field beamforming can be almost perfect in the angle domain, but not necessarily  in the distance domain. Given the fact that ${\rm U}^{\rm G1}_m$ is close to the base station and ${\rm U}^{\rm G2}_k$ is far away, ${\rm U}^{\rm G2}_k$ can select the legacy beam whose angle of departure is closest to its own one, i.e., $i^k = \underset{m}{\arg} \min |\theta_k^{\rm G2}-\theta_m^{\rm G1} |$, where $\left(\theta_k^{\rm G2}, r_k^{\rm G2}\right)$ denotes the polar coordinates of  ${\rm U}^{\rm G2}_k$, and $\left(\theta_m^{\rm G1}, r_m^{\rm G1}\right)$ denotes the polar coordinates of  ${\rm U}^{\rm G1}_m$.  If zero-forcing based beamforming is used, ${\rm U}^{\rm G2}_k$ can select the beam on which the user's effective channel gain is the largest, i.e., $i^k = \underset{m}{\arg} \max |\mathbf{g}_k^H\mathbf{w}_m^{\rm G1}|^2$. 

\vspace{-1em}
\subsection{Downlink MISO Hybrid NOMA Power Allocation} 
Similar to the previous section, an energy minimization problem can be formulated as follows:
  \begin{problem}\label{pb:3} 
  \begin{alignat}{2}
\underset{ {P}^{\rm G2}_{k,i}\geq 0  }{\rm{min}} &\quad     \sum^{K}_{k=1}\left(MT {P}^{\rm G2}_{k,1}  + KT{P}^{\rm G2}_{k,2}\right)\label{3tst:1}
\\ s.t. &\quad MT R^{\rm G2}_{k,1} +KT R^{\rm G2}_{k,2}  \geq TK R,    1\leq k \leq K \label{3tst:2} ,
  \end{alignat}
\end{problem}  
where $K $ is used on the right-hand side of \eqref{3tst:2}   to highlight that $K$ time slots are used during the second phase. Similarly to the SISO scenario, the perfect knowledge of the users' CSI is assumed to be available at the base station. 
In order to gain a better understanding to the properties of problem \eqref{pb:3}, the users' data rate expressions need to be simplified as follows. 

First, $R^{\rm G2}_{k}$ can be written  as the following explicit function of the power allocation coefficients:  
\begin{align}
R^{\rm G2}_{k} =   \log\left(1+\frac{g_{k,k}    {P}^{\rm G2}_{k,1}   }{  
 \sum _{ j\neq k}g_{k,j} {P}^{\rm G2}_{j,1} 
 +b_k}\right) ,
\end{align}
where $g_{k,j}= \left|\mathbf{g}_k^H \mathbf{w}_{i^j}^{\rm G1}\right|^2 $, and $b_k =P^{\rm G1} 
\sum^{M}_{m=1}\left|\mathbf{g}_k^H\mathbf{w}_m^{\rm G1}  
\right|^2 + 1$. Similarly, both $R^{\rm G1}_{{i^k}\rightarrow k} $ and $R_{k,2}^{\rm G2}$ can be expressed as the following explicit functions of ${P}^{\rm G2}_{k,1}$ and ${P}^{\rm G2}_{k,2}$:
 \begin{align}\label{r1m2}
R_{{i^k}\rightarrow k}^{\rm G1} =&  \log\left(1+\frac{h_{k,k}{P}^{\rm G2}_{k,1}}{   
 \sum _{ j\neq k} {h}_{k,j} {P}^{\rm G2}_{j,1} +d_k}\right) ,\\\nonumber
 R_{k,2}^{\rm G2} =  &  \log\left(1+ \frac{c_{k,k} {P}^{\rm G2}_{k,2} }{\sum^{K}_{i=1,i\neq k}c_{k,i}{P}^{\rm G2}_{i,2}+1} \right) ,
 \end{align}
 where $h_{k,j} = |\mathbf{h}_{i^k}^H \mathbf{w}_{i^j}^{\rm G1}|^2$,      $d_k = P^{\rm G1}  \sum^{M}_{m=1 } |\mathbf{h}_{i^k}^H \mathbf{w}_{m}^{\rm G1}|^2 +1$,  and $c_{k,i}= \left|\mathbf{g}_{k}^H \mathbf{w}^{\rm G2}_i  \right|^2 $. 

\subsubsection{OMA Power Allocation}
By assuming that ${\rm U}^{\rm G2}_k$  does not use the time slots in the first phase, i.e., ${P}^{\rm G2}_{k,1}=0$, the considered hybrid NOMA optimization problem is degraded to a simple OMA case, as follows: 
 \begin{problem}\label{pb:5} 
  \begin{alignat}{2}
\underset{ {P}^{\rm G2}_{k,2}\geq 0  }{\rm{min}} &\quad     \sum^{K}_{k=1}  {P}^{\rm G2}_{k,2} \label{5tst:1}
\\ s.t. &\quad  K R^{\rm G2}_{k,2}  \geq  KR,    1\leq k \leq K \label{5tst:2} ,
  \end{alignat}
\end{problem}  
which can be recast by using the simplified expressions of $ R_{k,2}^{\rm G2}$ as follows: 
 \begin{problem}\label{pb:6} 
  \begin{alignat}{2}
\underset{ {P}^{\rm G2}_{k,2}\geq 0  }{\rm{min}} &\quad     \sum^{K}_{k=1}  {P}^{\rm G2}_{k,2} \label{6tst:1}
\\ s.t. &\quad    \frac{c_{k,k} {P}^{\rm G2}_{k,2} }{\sum^{K}_{i=1,i\neq k}c_{k,i}{P}^{\rm G2}_{i,2}+1}   \geq e^{R}-1,    1\leq k \leq K \label{6tst:2} .
  \end{alignat}
\end{problem}  
It is straightforward to show that problem \eqref{pb:6} is a convex optimization problem, and hence can be solved efficiently by applying   off-shelf optimization solvers. 

\subsubsection{ Downlink Hybrid NOMA Power Allocation}
While the power allocation problem can be easily solved for the OMA case, it is more challenging to solve the general downlink hybrid NOMA problem which can be rewritten as follows: 
  \begin{problem}\label{pb:7} 
  \begin{alignat}{2}
\underset{ {P}^{\rm G2}_{k,i} \geq 0 }{\rm{min}} &\quad    \sum^{K}_{k=1}\left(M {P}^{\rm G2}_{k,1}  + K{P}^{\rm G2}_{k,2}\right)
\\ s.t. &\quad \frac{M}{K} \log\left(1+\frac{h_{k,k}{P}^{\rm G2}_{k,1}}{   
 \sum _{ j\neq k} {h}_{k,j} {P}^{\rm G2}_{j,1} +d_k}\right) \label{7tst:2} + \\&  \quad    \log\left(1+ \frac{c_{k,k} {P}^{\rm G2}_{k,2} }{\sum^{K}_{i=1,i\neq k}c_{k,i}{P}^{\rm G2}_{i,2}+1} \right) \geq  R,  1\leq k \leq K \nonumber 
\\
&\quad \frac{M}{K}  \log\left(1+\frac{g_{k,k}   {P}^{\rm G2}_{k,1}    }{  
 \sum _{ j\neq k}g_{k,j}  {P}^{\rm G2}_{j,1} 
 +b_k}\right) + \label{7tst:3}  \\&  \quad  \log\left(1+ \frac{c_{k,k} {P}^{\rm G2}_{k,2} }{\sum^{K}_{i=1,i\neq k}c_{k,i}{P}^{\rm G2}_{i,2}+1} \right)  \geq  R,    1\leq k \leq K .\nonumber
  \end{alignat}
\end{problem}
Unlike the OMA problem in \eqref{pb:6}, problem \eqref{pb:7} is non-convex, mainly due to the fact that the optimization variables appear in both the numerators and the denominators of the fractions in \eqref{7tst:2} and \eqref{7tst:3}.

 In the literature, SCA has been shown to be  effective for tackling  the non-convex constraints shown  in  \eqref{7tst:2} and \eqref{7tst:3} \cite{6675875}. To facilitate the implementation of SCA, the following vectors are defined:   $\mathbf{p}=\begin{bmatrix}{P}^{\rm G2}_{1,1} &\cdots & {P}^{\rm G2}_{K,1}
\end{bmatrix}^T$ and $\mathbf{e}=\begin{bmatrix}{P}^{\rm G2}_{1,2} &\cdots & {P}^{\rm G2}_{K,2}
\end{bmatrix}^T$, which means that constraint  \eqref{7tst:3}    can be   rewritten  as follows: 
  \begin{align}
&M  \log\left( 
\bar{\mathbf{g}}_k^T\mathbf{p}
 +b_k\right)-M  \log\left(  
\tilde{\mathbf{g}}_k^T\mathbf{p}
 +b_k \right)   \nonumber \\&  +K\log\left(\mathbf{c}_k^T\mathbf{e}+1 \right)-K\log\left( \tilde{\mathbf{c}}_k^T\mathbf{e}+1 \right)\geq  KR,  
\end{align}
where $\bar{\mathbf{g}}_k=\begin{bmatrix}g_{k,1}  &\cdots &g_{k,K} 
\end{bmatrix}^T$,   the $k$-th element of  $\tilde{\mathbf{g}}_k$ is zero,     the other elements of $\tilde{\mathbf{g}}_k$ are the same as those of $\bar{\mathbf{g}}_k$, $\mathbf{c}_k=\begin{bmatrix}c_{k,1}  &\cdots &c_{k,K} 
\end{bmatrix}^T$, and $\tilde{\mathbf{c}}_k$ is constructed in a similar manner as  $\tilde{\mathbf{g}}_k$ by using the elements of $ {\mathbf{c}}_k$. 

Therefore, at the $i$-th iteration of SCA, constraint  \eqref{7tst:3} can be approximated as follows: 
 {\small  \begin{align}\label{consdd2}
&\frac{M}{K}  \log\left( 
\bar{\mathbf{g}}_k^T\mathbf{p}
 +b_k\right)-\frac{M}{K} \log\left(  
\tilde{\mathbf{g}}_k^T\mathbf{p}_{i-1}
 +b_k \right)  +\frac{M}{K}  \frac{\tilde{\mathbf{g}}_k^T(\mathbf{p}-\mathbf{p}_{i-1})}{\tilde{\mathbf{g}}_k^T\mathbf{p}
 +b_k}  \nonumber \\& + \log\left(\mathbf{c}_k^T\mathbf{e}+1 \right)-\log\left( \tilde{\mathbf{c}}_k^T\mathbf{e}_{i-1}+1 \right)+ \frac{\tilde{\mathbf{c}}_k^T(\mathbf{e}-\mathbf{e}_{i-1})  }{ \tilde{\mathbf{c}}_k^T\mathbf{e}_{i-1}+1}\geq  R, 
\end{align}}
where $\mathbf{p}_{i-1}$ and $\mathbf{e}_{i-1}$ are obtained from the $(i-1)$-th iteration.  
Similarly, constraint    \eqref{7tst:2} can be first recast as follows:
  \begin{align}
&M \log\left( \bar{\mathbf{h}}_k^T\mathbf{p}  +d_k \right)-M \log\left(    
 \tilde{\mathbf{h}}_k^T\mathbf{p}   +d_k \right) \nonumber \\& +K\log\left(\mathbf{c}_k^T\mathbf{e}+1 \right)-K\log\left( \tilde{\mathbf{c}}_k^T\mathbf{e}+1 \right)\geq  KR, 
\end{align}
where $ \bar{\mathbf{h}}_k=\begin{bmatrix}
 {h}_{k,1} &\cdots & {h}_{k,K}
\end{bmatrix}^T$, and $ \tilde{\mathbf{h}}_k$ is the same as  $ \bar{\mathbf{h}}_k$ except its $k$-th element being zero. Again, at the $i$-th iteration of SCA, constraint  \eqref{7tst:2}  can be approximated as follows:
 {\small \begin{align}
&\frac{M}{K} \log\left( \bar{\mathbf{h}}_k^T\mathbf{p}  +d_k \right)-\frac{M}{K} \log\left(    
 \tilde{\mathbf{h}}_k^T\mathbf{p}_{i-1}   +d_k \right)+\frac{M}{K} \frac{ \tilde{\mathbf{h}}_k^T(\mathbf{p} -\mathbf{p} _{i-1})}{    
 \tilde{\mathbf{h}}_k^T\mathbf{p}   +d_k } \nonumber \\& + \log\left(\mathbf{c}_k^T\mathbf{e}+1 \right)- \log\left( \tilde{\mathbf{c}}_k^T\mathbf{e}_{i-1}+1 \right)+  \frac{\tilde{\mathbf{c}}_k^T(\mathbf{e}-\mathbf{e}_{i-1})  }{ \tilde{\mathbf{c}}_k^T\mathbf{e}_{i-1}+1 }\geq  R . \label{consdd1}
\end{align}}
SCA can be carried out iteratively to obtain a suboptimal solution of problem \eqref{pb:7}, where  the $i$-th stage of the SCA algorithm  
requires solving   the following 
convex optimization problem:   \begin{problem}\label{pb:8} 
  \begin{alignat}{2}
\underset{\mathbf{p},\mathbf{e} \geq 0 }{\rm{min}} &\quad  M\mathbf{1}_K^T\mathbf{p}  +K  \mathbf{1}_K^T\mathbf{e} \label{1tst:1}
\\ s.t. &\quad %\eqref{consdd2}, \eqref{consdd1}.
\frac{M}{K}\log\left( \bar{\mathbf{h}}_k^T\mathbf{p}  +d_k \right)-\frac{M}{K} \log\left(    
 \tilde{\mathbf{h}}_k^T\mathbf{p}_{i-1}   +d_k \right)  \\\nonumber & +\frac{M}{K}\frac{ \tilde{\mathbf{h}}_k^T(\mathbf{p} -\mathbf{p} _{i-1})}{    
 \tilde{\mathbf{h}}_k^T\mathbf{p}   +d_k }+ \log\left(\mathbf{c}_k^T\mathbf{e}+1 \right)\\\nonumber & - \log\left( \tilde{\mathbf{c}}_k^T\mathbf{e}_{i-1}+1 \right)+  \frac{\tilde{\mathbf{c}}_k^T(\mathbf{e}-\mathbf{e}_{i-1})  }{ \tilde{\mathbf{c}}_k^T\mathbf{e}_{i-1}+1 }\geq  R ,  1\leq k \leq K  ,
\\
&\quad \frac{M}{K} \log\left( 
\bar{\mathbf{g}}_k^T\mathbf{p}
 +b_k\right)-\frac{M}{K}  \log\left(  
\tilde{\mathbf{g}}_k^T\mathbf{p}_{i-1}
 +b_k \right)   \\& +\frac{M}{K}  \frac{\tilde{\mathbf{g}}_k^T(\mathbf{p}-\mathbf{p}_{i-1})}{\tilde{\mathbf{g}}_k^T\mathbf{p}
 +b_k}  \nonumber+ \log\left(\mathbf{c}_k^T\mathbf{e}+1 \right)\\\nonumber &- \log\left( \tilde{\mathbf{c}}_k^T\mathbf{e}_{i-1}+1 \right)+  \frac{\tilde{\mathbf{c}}_k^T(\mathbf{e}-\mathbf{e}_{i-1})  }{ \tilde{\mathbf{c}}_k^T\mathbf{e}_{i-1}+1 }\geq  R,    1\leq k \leq K  ,
  \end{alignat}
\end{problem}
where $\mathbf{1}_K$ is a $K\times 1$ all-one vector.  
 
 \section{Simulation Results}\label{section simulation}
The performance of downlink hybrid NOMA transmission for SISO and MISO systems is evaluated   in the following two subsections, respectively. 
 
     \begin{figure}[t] \vspace{-0em}
\begin{center}
\subfigure[Special case with ordered channel gains]{\label{fig1a}\includegraphics[width=0.35\textwidth]{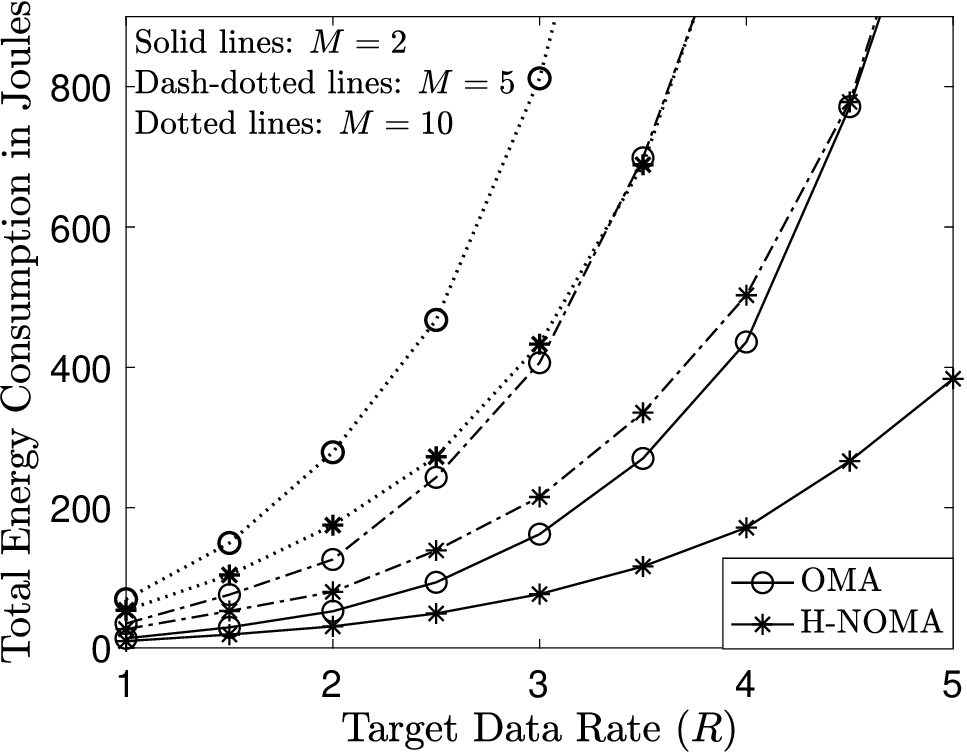}} 
\subfigure[ General case with unordered   channel gains]{\label{fig1b}\includegraphics[width=0.35\textwidth]{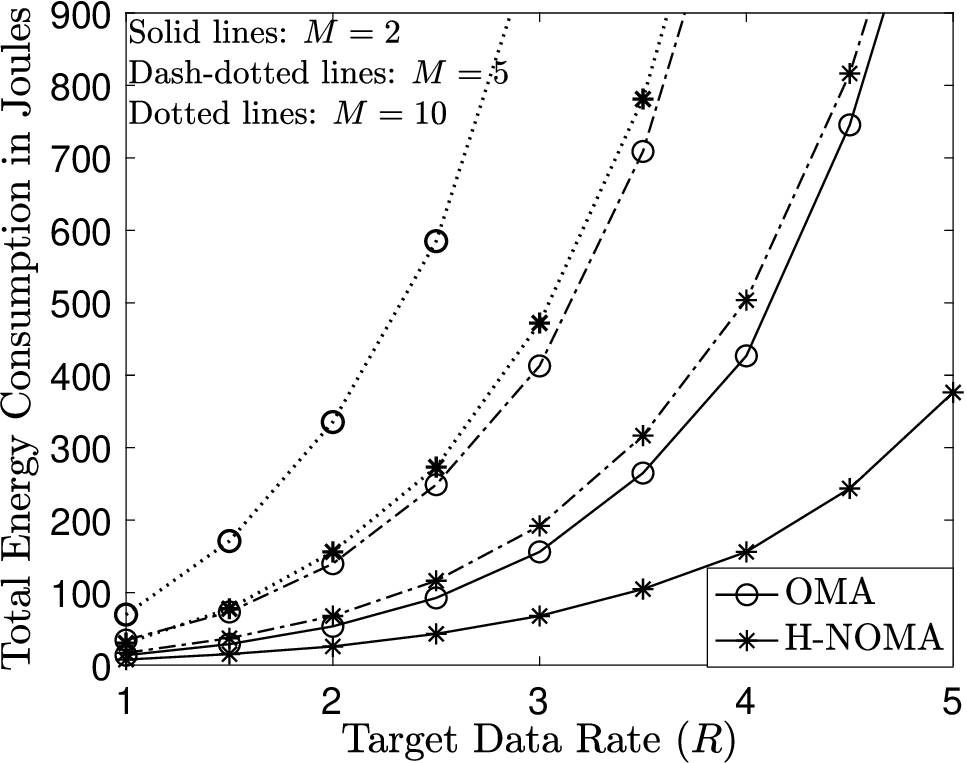}} \vspace{-1em}
\end{center}
\caption{ The total energy consumption realized by the considered transmission schemes in the SISO case.        \vspace{-1em} }\label{fig1}\vspace{-1em}
\end{figure}

     \begin{figure}[t] \vspace{-0em}
\begin{center} 
\subfigure[  Accumulated interference from the first $m$ users  in ${\rm T}_i$ - ${\rm I}_{m,i}=\sum^{m}_{j=i}P_{j,i}$]{\label{fig3a}\includegraphics[width=0.4\textwidth]{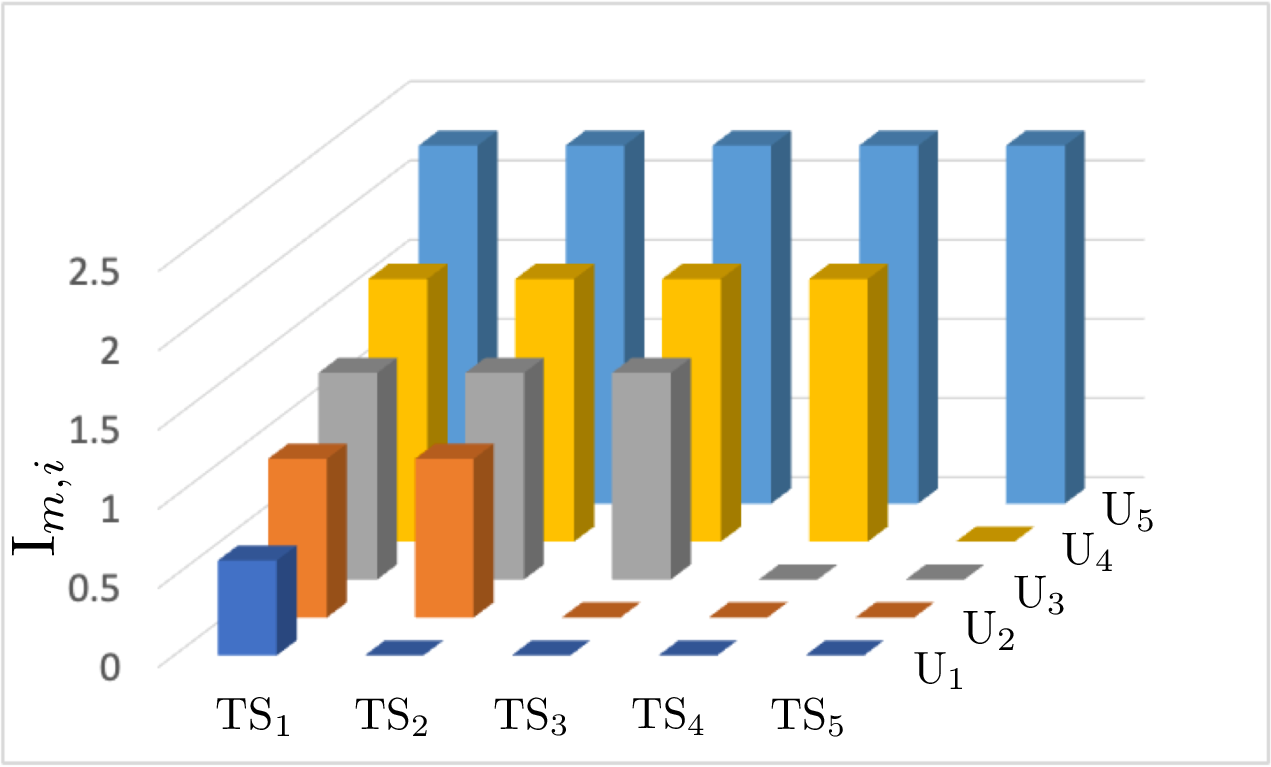}}
\subfigure[${\rm U}_m$'s power allocation in ${\rm T}_i$ - $P_{m,i}$]{\label{fig3b}\includegraphics[width=0.4\textwidth]{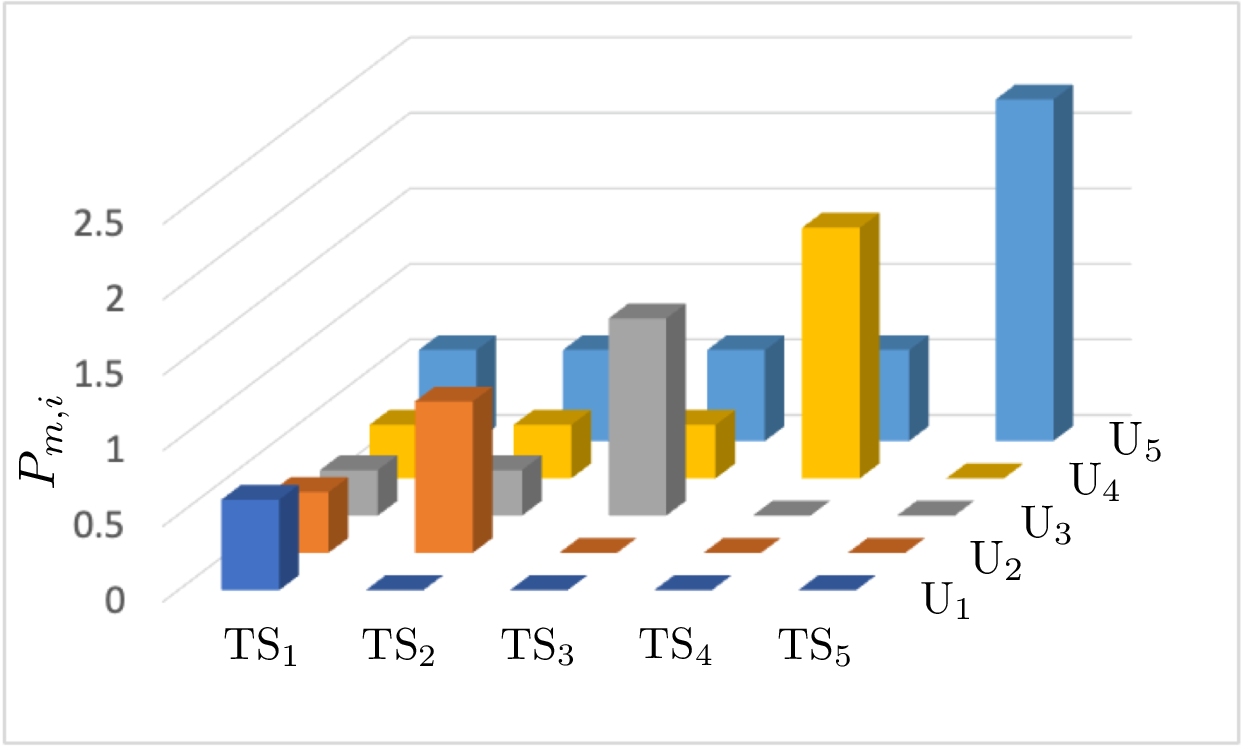}} \vspace{-1em}
\end{center}
\caption{  An illustration of the properties of hybrid NOMA power allocation. $h_m=m$, $1\leq m \leq M$,   and $R=2$ nats per channel use (NPCU).    \vspace{-1em} }\label{fig3}\vspace{-0em}
\end{figure}

 \subsection{Downlink SISO Hybrid NOMA Transmission}
 In Fig. \ref{fig1}, the total energy consumption realized by the proposed downlink  hybrid NOMA scheme is shown as a function of $R$, where the performance of OMA is also shown as a benchmark. For Fig. \ref{fig1}, the users' channels are assumed to be independent and identically distributed (i.i.d.) complex Gaussian random variables with zero mean and unit variance\footnote{With complex-Gaussian fading, the average   energy consumption can be infinite, as explained in the following. Take OMA as an example, where the average   energy consumption of    ${\rm U}_m^{\rm G1}$  is  $\mathcal{E}\left\{
\frac{e^{R}-1}{|h_{m}|^2}
\right\}$, which is   infinite since $\mathcal{E}\left\{
\frac{ 1}{x}
\right\}\rightarrow \infty$ for exponentially distributed $x$, where $\mathcal{E}\left\{ \cdot
\right\}$ denotes the expectation. Therefore,  in the simulation, a constraint of  $|h_{m}|^2\geq 0.01$ is imposed  to avoid this singularity issue.   }. Fig. \ref{fig1a} focused on  the special case that the users' channel gains are  ordered,  and   Fig. \ref{fig1b} focuses on the general case that the users' channel gains are not ordered. For the case with the ordered channel gains, the simulation results are obtained by applying optimization solvers to  solve problem \eqref{pb:1.5}, and the analytical results are obtained by applying the hybrid NOMA solution provided  in Lemma \ref{lemma1}. For the general case with unordered channel gains, only simulation results are presented.  The two  figures in Fig. \ref{fig1} show  that the use of  downlink hybrid NOMA can reduce the total energy consumption significantly, compared to OMA,  particularly for large $R$.  Fig. \ref{fig1a} also shows that the analytical results perfectly match  the simulation results, which verifies  Corollary \ref{corollary0}, i.e., for the special case with ordered channel gains, hybrid NOMA power allocation is the optimal solution of problem \eqref{pb:1.5} and  outperforms OMA. 

In Figs. \ref{fig3} and \ref{fig2}, the properties of hybrid NOMA power allocation are studied by focusing on a deterministic case with $h_m=m$, $1\leq m \leq M$. In particular, Fig. \ref{fig3a} shows that the accumulated interference from the first $m$ users  at different time slots is identical, i.e.,   $ \sum^{m}_{j=1}P_{j,1}=\cdots =\sum^{m}_{j=m-1}P_{j,m-1}= P_{m,m}$, which confirms Lemma \ref{lemma2}.
Fig. \ref{fig3b} shows the users' power allocation coefficients in different time slots, and demonstrates that ${\rm U}_m$ will use the same transmit power during the first $(m-1)$ time slots, i.e, the NOMA time slots. As discussed in the proof of Lemma \ref{lemma3}, the observation from Fig. \ref{fig3a} is the reason for the observation from Fig. \ref{fig3b}. Take ${\rm U}_5$ as an example.  Fig. \ref{fig3a} shows that the interferences experienced by ${\rm U}_5$ in the first $4$ time slots, i.e., the yellow bars, are   same. Furthermore, we note that  ${\rm U}_5$'s channel gains in these time slots are assumed to be the same, which means that  ${\rm U}_5$ naturally uses the same transmit powers during the first $4$ time slots. In Fig. \ref{fig2}, the hybrid NOMA power allocation solution   in Lemma \ref{lemma1} is shown to perfectly match the solution obtained by an exhaustive search, which verifies the closed-form expression of the optimal power allocation solution shown in  Lemma \ref{lemma1}.

    \begin{figure}[t]\centering \vspace{-0em}
    \epsfig{file=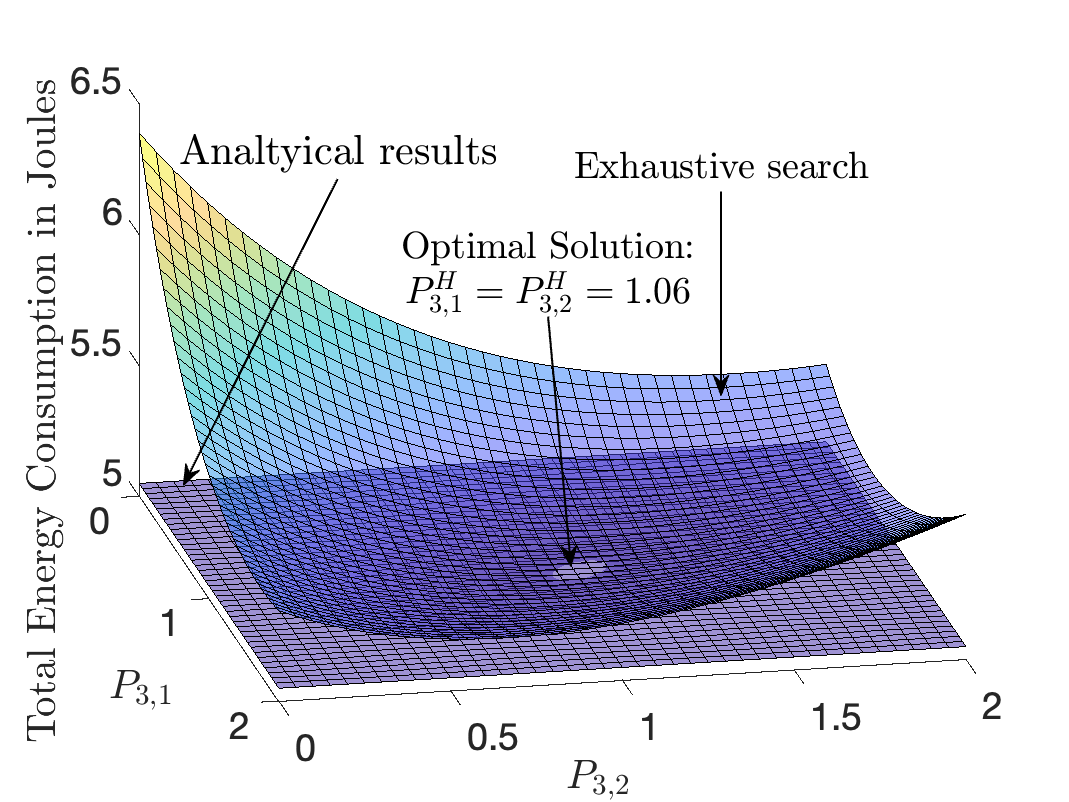, width=0.4\textwidth, clip=}\vspace{-0.5em}
\caption{Verification for the optimality of the hybrid NOMA solution shown in Lemma \ref{lemma1}, where   $h_m=m$, $1\leq m \leq M$, and $R=2$    NPCU. \vspace{-1em}    }\label{fig2}   \vspace{-1em} 
\end{figure}

 \subsection{Downlink MISO Hybrid NOMA Transmission}
 In this subsection, the performance of  downlink  MISO hybrid NOMA transmission is focused on, where the   carrier frequency is set to $28$ GHz, and the antenna spacing is half of the wavelength. In Fig. \ref{fig4}, the total energy consumption realized by the considered transmission schemes is illustrated, by assuming that the  users in $\mathcal{G}_1$ are  randomly located in the half-ring with radius $10$ m and $50$ m. The    users in $\mathcal{G}_2$, ${\rm U}_k^{\rm G2}$, are at  fixed locations     $200$ m away from the base station, and their angles are equally  spaced between $-\frac{\pi}{3}$ and $\frac{\pi}{3}$. The two  sub-figures of Fig. \ref{fig4} demonstrate that the use of  downlink hybrid NOMA can realize a significant performance gain over OMA, particularly if  the size of   $\mathcal{G}_1$ and the target data rate are large. Fig. \ref{fig4a} shows that the use of more antennas can reduce the energy consumption for both OMA and hybrid NOMA, where the performance gap between the two schemes is increased if there are   more    users in $\mathcal{G}_1$. Fig. \ref{fig4b} shows that when the target data rate is small, the performance gain of  downlink hybrid NOMA over OMA is small. This is due to the fact that for   small target data rates,   the time slots available in the second phase are sufficient to serve ${\rm U}_k^{\rm G2}$, and there is no need to employ NOMA and have access to the time slots in the first phase.  In addition, Fig. \ref{fig4} shows that   beamfocusing outperforms zero-forcing  based beamforming, which is consistent with the conclusion made in \cite{Dingnfhybrid1}.
 
   \begin{figure}[t] \vspace{-0em}
\begin{center}
\subfigure[$R=5$ NPCU]{\label{fig4a}\includegraphics[width=0.35\textwidth]{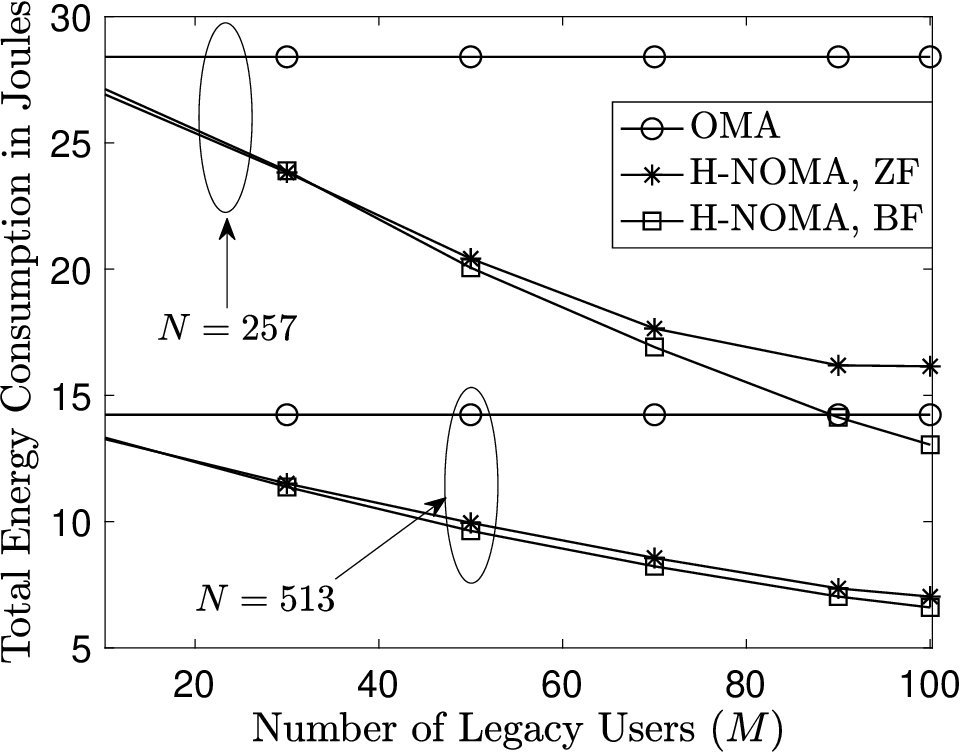}} 
\subfigure[$N=257$]{\label{fig4b}\includegraphics[width=0.35\textwidth]{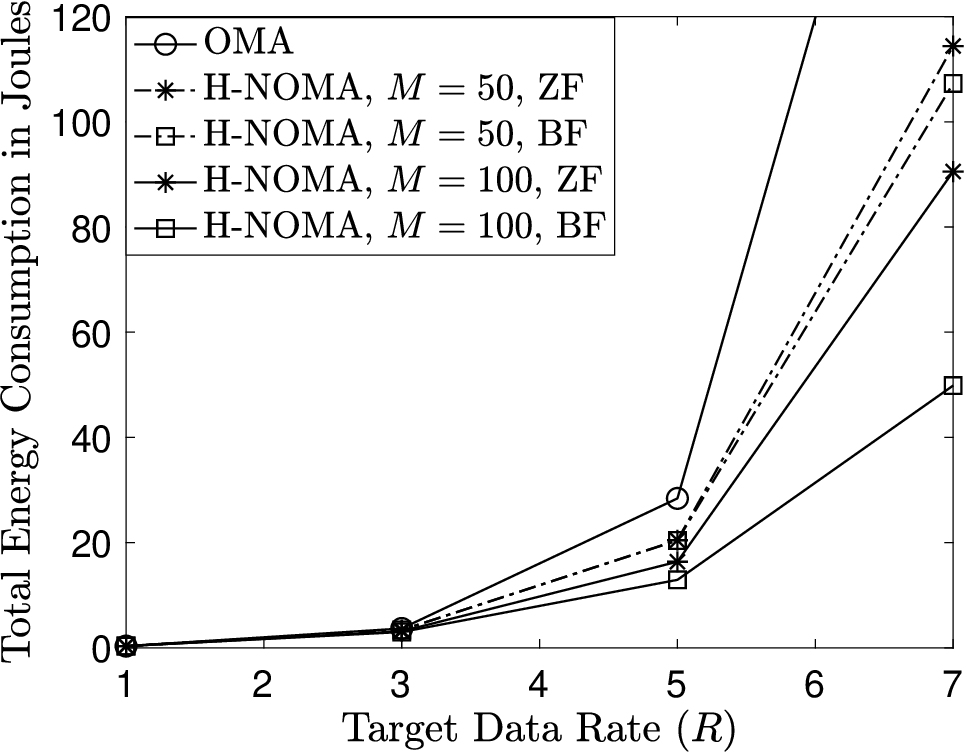}} \vspace{-1em}
\end{center}
\caption{The total energy consumption realized by the considered transmission schemes in the MISO case. The  users in $\mathcal{G}_1$, ${\rm U}_m^{\rm G1}$, are uniformly  located in the half-ring with radius $10$ m and $50$ m. The    users in $\mathcal{G}_2$, ${\rm U}_k^{\rm G2}$, are at fixed locations   $200$ m away from the base station, and their angles are equally  spaced between $-\frac{\pi}{3}$ and $\frac{\pi}{3}$.   $K=3$ and $P^{\rm G1}=10$ dBm.        \vspace{-1em} }\label{fig4}\vspace{-1em}
\end{figure}
 
    \begin{figure}[t] \vspace{-0em}
\begin{center} 
\subfigure[ $r^{\rm G1}=50$ m]{\label{fig5a}\includegraphics[width=0.35\textwidth]{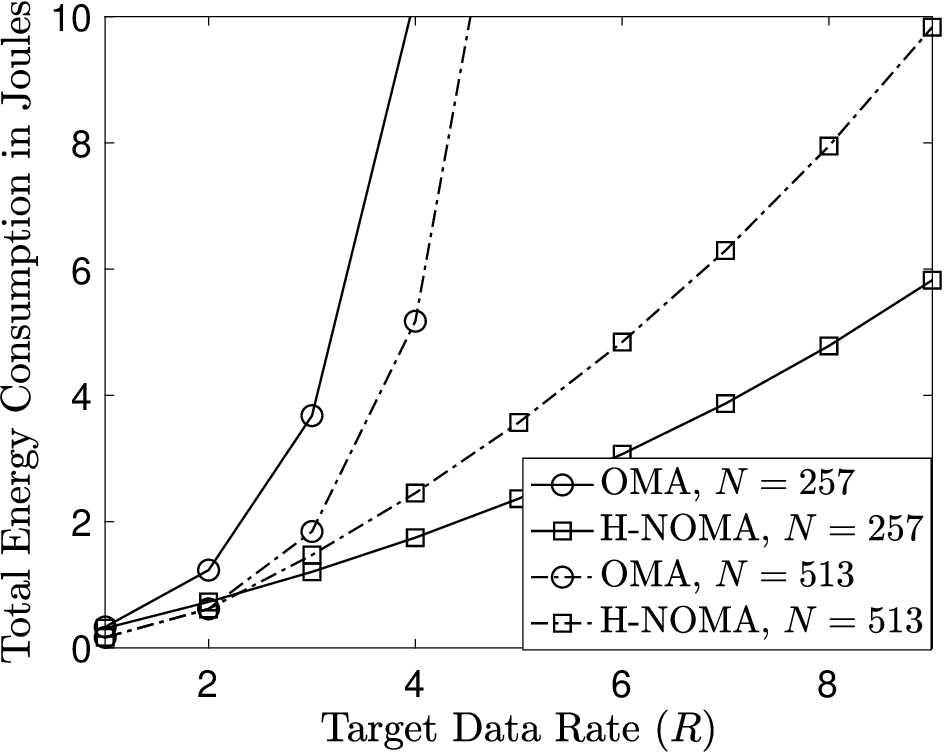}} \vspace{-1em}
\subfigure[ $r^{\rm G1}=100$ m]{\label{fig5b}\includegraphics[width=0.35\textwidth]{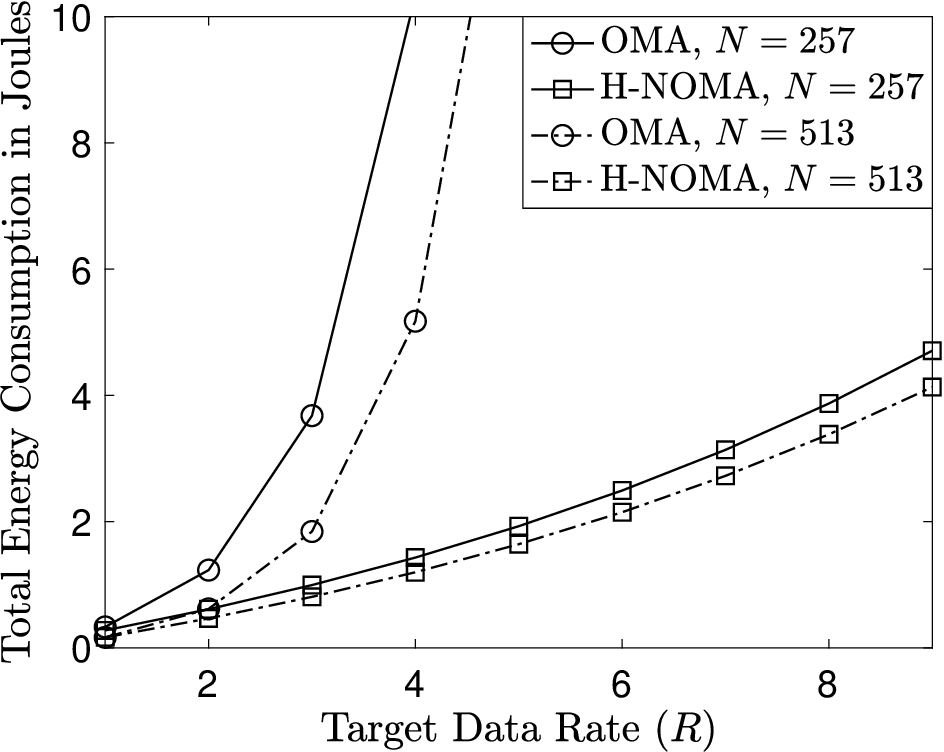}} 
\end{center}
\caption{ A deterministic study of  the total energy consumption realized by the considered transmission schemes. The    users in $\mathcal{G}_1$, ${\rm U}_m^{\rm G1}$,  are at   fixed locations  $r^{\rm G1}$ m away from the base station, and their angles are equally  spaced by $\frac{\pi}{2M}$.   The    users in $\mathcal{G}_2$, ${\rm U}_k^{\rm G2}$,  are $200$ m away from the base station, where $\theta_k^{\rm G1}=\theta_k^{\rm G2}$, $\left(\theta_k^{\rm G2}, r_k^{\rm G2}\right)$ denotes the polar coordinates of  ${\rm U}^{\rm G1}_m$, and $\left(\theta_m^{\rm G1}, r_k^{\rm G1}\right)$ denotes the polar coordinates of  ${\rm U}^{\rm G1}_M$.    $M=20$, $K=3$, and $P^{\rm G1}=10$ dBm.      \vspace{-1em} }\label{fig5}\vspace{-1em}
\end{figure}

\begin{table*}\vspace{-3em}
  \centering
  \caption{Total Energy Consumption for the Considered Deterministic Scenario of  $r^{\rm G1}=50$ m. }\vspace{-1em}
\label{table2x}
  \begin{tabular}{|c|c|c|c|c|c|c|c| }
\hline
R &  $ 1$ & $ 2$& $3$ & $4$ &$5$ &$6$&$7$ \\
    \hline
OMA, BF & $0.1692$  &$	0.6665$  &	$2.4081$   	& $ 22.5109$  	&$\infty $ &$\infty$&$\infty$  \\   
    \hline 
   H-NOMA, BF  & $0.1569$  	&$0.3684$  &$	0.6103$ 	&$0.8871$  	&$1.2043$  	&$1.5678$  	&$1.9849$  \\
    \hline
OMA, ZF & $0.1678$ 	&$0.6240$ 	&$1.864$ 	&$5.234$ 	&$14.397$ 	&$39.3052$ 	&$107.0106$   \\   
    \hline 
   H-NOMA, ZF  &$0.1568$ 	&$0.3675$ 	&$0.6075$  	&$0.8810$ 	&$1.1926$ 	&$1.5476$ 	&$1.9521$         \\\hline
  \end{tabular}\vspace{-1em}
\end{table*}

In Fig. \ref{fig5} and Table \ref{table2x}, a deterministic scenario is considered in order to reveal a few interesting properties of  downlink hybrid NOMA for   near-field communications.  In particular, the  users in $\mathcal{G}_1$, ${\rm U}_m^{\rm G1}$,  are at  fixed locations  $r^{\rm G1}$ m away from the base station, and their angles are equally  spaced by $\frac{\pi}{2M}$.  The    users in $\mathcal{G}_2$, ${\rm U}_k^{\rm G2}$,  are $200$ m away from the base station, where $\theta_k^{\rm G1}=\theta_k^{\rm G2}$.        Fig. \ref{fig5} is based on the use of beamfocusing, and  demonstrates that the use of hybrid NOMA can reduce the energy consumption of OMA significantly, which is consistent to the observations made in Fig. \ref{fig4}. Comparing the two   sub-figures of Fig. \ref{fig5}, an interesting observation is that for the case of $r^{\rm G1}=50$ m, the use of more antennas at the base station degrades the performance of  downlink hybrid NOMA. This is due to the fact that   the  users in $\mathcal{G}_1$ are very close to the base station, and hence for large numbers of antennas, the resolution of near-field beamforming   is almost perfect, i.e., the orthogonality among the  users' channel vectors is almost perfect, and one   user's beamfocusing based beam covers a small area centered at the   user. As a result, it is challenging for a  user in $\mathcal{G}_2$ to find a suitable legacy beam, which leads to the performance loss of  downlink hybrid NOMA. However, when $r^{\rm G1}$ is increased from $50$ m to $100$ m, the resolution of near-field beamforming becomes poor, which introduces the opportunity for   beam sharing among the users. 
In Table  \ref{table2x}, it is shown that for $r^{\rm G1}=50$ m,  beamfocusing and zero-forcing beamforming achieve practically  the same performance, which is again due to the fact that the  users in $\mathcal{G}_1$ are very close to the base station, and hence their channel vectors are almost orthogonal. An interesting observation from the table is that for OMA, the use of beamfocusing can lead to  infinite energy consumption, as explained in the following. Recall that in OMA,  ${\rm U}_k^{\rm G2}$ has to rely on the time slots in the second phase, and its achievable data rate is  given by $R_{k,2}^{\rm G2} =    \log\left(1+ \frac{c_{k,k} {P}^{\rm G2}_{k,2} }{\sum^{K}_{i=1,i\neq k}c_{k,i}{P}^{\rm G2}_{i,2}+1} \right)$. With beamfocusing, the inter-beam terms, $c_{k,i}$, are not zero, which means that there exists an upper bound on $R_{k,2}^{\rm G2}$. When the target data rate is larger than this upper bound, there is no  feasible solution for ${P}^{\rm G2}_{k,2}$ to realize the large target dada rate. However, by using hybrid NOMA, the user can have access to those time slots in the first phase as well, which avoids the singularity   suffered  by OMA and reveals an important advantage of  downlink hybrid NOMA. %With zero-forcing beamforming, $c_{k,i}=0$ for $k\neq i$, and hence the singularity issue does not exist for the case with zero-forcing beamforming, as shown in the table. 

%For fixed setup, we observe that BF outperforms ZF. The reason is that PUs' channel gains on the ZF channel matrix are dynamic, and be quite weak. 
%
%The imperfect resolution is the reason for the difference between ZF and BF.
%
%However, with random cases, we did not observe since for the fixed cases, the SUs are aligned with PUs, and hence there is limited inter-beam interference. In random cases, they are not. Inter-beam interference can be severe for BF, which reduces the performance of BF, whereas ZF does not suffer from inter-beam interference. 

%We used the same setup as Fig. \ref{fig2}, except $N=63$ and $R=8$ (24?). 

\section{Conclusions} \label{section conclusion}
In this paper,   hybrid NOMA assisted downlink transmission schemes have been developed for    SISO and MISO systems, respectively. For   downlink  SISO systems, analytical results were derived  to reveal several important properties of hybrid NOMA power allocation. For example,  in the case that users' channel gains are ordered  and the durations of all  time slots are the same,  downlink hybrid NOMA was shown to always outperform OMA, which is different from the conclusions obtained for uplink  hybrid NOMA   transmission.  For  downlink   MISO systems,   near-field communication   was considered  to illustrate how NOMA can be used as an add-on in   legacy networks based on SDMA and TDMA.  Simulation results were   presented to verify the developed analytical results and demonstrate the superior performance of  downlink  hybrid NOMA over conventional OMA. In this paper,   MISO hybrid NOMA was implemented between two groups of users, where an important direction for future research is to study how MISO  hybrid NOMA can be extended to the case with more than two groups of users. In addition, MISO hybrid NOMA was considered for  an ideal near-field system. Thus,  an important direction for future research is the consideration of the impact of practical issues, such as non-line-of-sight   paths and hybrid beamforming, on the design of  downlink hybrid NOMA. 
\appendices

\section{Proof for Lemma \ref{lemma1}}
\label{proof1}

It is straightforward to show that ${\rm U}_1$'s power allocation is   the same as that for OMA, i.e.,  $P_{1,1}=P_1^{\rm OMA}$. Therefore, the case of $m>1$ is focused on in this proof.  

Recall that  $R_{m,i} = \min \left\{
  R_{m,i}^i, \cdots,  R_{m,i}^m
 \right\}$.   We note that $f(x)\triangleq \frac{ax}{bx+1}$ is a monotonically increasing function of $x$, for positive $a$, $b$ and $x$. By using this observation   and the fact that   $|\bar{h}_{m,i}|^2=  \min \left\{
  | {h}_{i}|^2, \cdots,  | {h}_{m}|^2
 \right\}$,  the expression for $R_{m,i} $ can be simplified, and problem \eqref{pb:1.5} can be recast as follows: 
   \begin{problem}\label{pb:2} 
  \begin{alignat}{2}
\underset{P_{m,i}    }{\rm{min}} &\quad    \sum^{m}_{i=1}P_{m,i}  \label{2tst:1}
\\ s.t. &\quad   \sum^{m}_{i=1} \log\left(
 1+ b_{m,i}|\bar{h}_{m,i}|^2P_{m,i}\right) \geq R,    \label{2tst:2} 
  \end{alignat}
\end{problem}  
 where the constraints   $P_{m,i}\geq0$ are omitted, and  $b_{m,i}=\frac{1}{|\bar{h}_{m,i}|^2\sum^{m-1}_{j=i}P_{j,i} +1}$.
 
It is straightforward to show that problem \eqref{pb:2} is a convex optimization problem, and its optimal solution can be found by applying the KKT conditions. In particular, the Lagrangian  of problem \eqref{pb:2} is given by
\begin{align}\label{lagran}
L =&  \sum^{m}_{i=1}P_{m,i}  + \lambda \left(R- \sum^{m}_{i=1} \log\left(
 1+ b_{m,i}|\bar{h}_{m,i}|^2P_{m,i}\right)\right),
\end{align}
where $\lambda$ is the Lagrangian  multiplier.  
The derivative of the Lagrange  is given by
\begin{align}\label{kkt1x}
\frac{\partial L}{\partial P_{m,i}} = 1 - \lambda \frac{b_{m,i}|\bar{h}_{m,i}|^2}{    
 1+ b_{m,i}|\bar{h}_{m,i}|^2P_{m,i} }=0,
\end{align}
which means that $ P_{i,m}$ can be expressed as follows:
 \begin{align}\label{pim1}
 P_{m,i} =\lambda  -\frac{|\bar{h}_{m,i}|^2\sum^{m-1}_{j=i}P_{j,i} +1}{  |\bar{h}_{m,i}|^2}.
\end{align}
The expression for  $\lambda$ can be found by using the following equality: 
$  \sum^{m}_{i=1} \log\left(
 1+ b_{m,i}|\bar{h}_{m,i}|^2P_{m,i}\right)=R$,
which yields
\begin{align}\label{lambda}
 \lambda  =\left(\frac{e^R}{   \prod^{m}_{i=1} b_{m,i}|\bar{h}_{m,i}|^2}\right)^{\frac{1}{m}}. 
\end{align}
By substituting \eqref{lambda} into \eqref{pim1},   the optimal power allocation solution is given by 
 \begin{align}
 P_{m,i}  =&\left(\frac{e^R}{   \prod^{m}_{p=1} b_{m,p}|\bar{h}_{m,i}|^2}\right)^{\frac{1}{m}}  -\frac{|\bar{h}_{m,i}|^2\sum^{m-1}_{j=i}P_{j,i} +1}{  |\bar{h}_{m,i}|^2}.
\end{align}
The proof of the lemma is complete. 

\section{Proof for Lemma \ref{lemma2}}\label{proof2}

By using the assumption that the users are ordered according to their channel gains, i.e.,  $|h_m|^2> |h_{m+1}|^2$,   $|\bar{h}_{m,i}|^2$ can be simplified as follows:
\begin{align}
|\bar{h}_{m,i}|^2\triangleq \min \left\{
  | {h}_{i}|^2, \cdots,  | {h}_{m}|^2
 \right\}=|h_m|^2\triangleq\gamma_m,
 \end{align}
 which means that  
  the expression    of $P_{m,i}^{\rm H}$ can be simplified as follows: 
 \begin{align}\label{pmis1}
 P_{m,i}^{\rm H} = \left(\frac{e^R}{   \prod^{m}_{p=1}\frac{ \gamma_{m } }{ \gamma_{m } \sum^{m-1}_{j=p}P_{j,p}^{\rm H} +1}}\right)^{\frac{1}{m}}  -\frac{ \gamma_{m } \sum^{m-1}_{j=i}P_{j,i}^{\rm H} +1}{  \gamma_{m } }.
\end{align}
The lemma can be proved by mathematical induction. 

\subsection{The Base Case $m=2$ }
For the special case of $m=2$, by using \eqref{pmis1}, ${\rm U}_2$'s transmit power during the first two time slots are given by
  \begin{align} 
 P_{2,1}^{\rm H} =& \left(\frac{e^R}{   \frac{ \gamma_{2 }^2 }{ \gamma_{2 }  P_{1,1}^{\rm H} +1} }\right)^{\frac{1}{2}}  -\frac{  \gamma_{2 }  P_{1,1}^{\rm H} +1}{   \gamma_{2 } },\\\nonumber
 P_{2,2}^{\rm H} =& \left(\frac{e^R}{   \frac{ \gamma_{2 }^2 }{ \gamma_{2 }  P_{1,1}^{\rm H} +1} }\right)^{\frac{1}{2}}  -\frac{ 1}{    \gamma_{2 } }.
\end{align}
In addition, $P_{1,1}^{\rm H}= \frac{e^R-1}{\gamma_1}$ in order  to ensure $\log \left(1+ 
  \gamma_1 P_{1,1}^{\rm H}\right)=R$.

Therefore, $P_{2,1}^{\rm H}+P_{1,1}^{\rm H}$ is given by
\begin{align}
P_{2,1}^{\rm H}+P_{1,1}^{\rm H} = &\left(\frac{e^R}{   \frac{ \gamma_{2 }^2 }{ \gamma_{2 }  P_{1,1}^{\rm H} +1} }\right)^{\frac{1}{2}}  -\frac{  \gamma_{2 }  P_{1,1}^{\rm H} +1}{   \gamma_{2 } } +\frac{e^R-1}{\gamma_1}
\\\nonumber =&\left(\frac{e^R}{   \frac{ \gamma_{2 }^2 }{ \gamma_{2 }  P_{1,1}^{\rm H} +1} }\right)^{\frac{1}{2}}  -\frac{   1}{   \gamma_{2 } }  =P_{2,2}^{\rm H},
\end{align}
which means that the lemma holds for the base case.

 \subsection{Inductive Step}
 Assume that the lemma holds for the case of  $m$, which means that
 \begin{align}\label{assum}
 \sum^{m}_{i=1}P_{i,1}^{\rm H}=\cdots =  \sum^{m}_{i=m-1}P_{i,m-1}^{\rm H}=P_{m,m}^{\rm H}.
 \end{align}
 The aim of this section is to show that the lemma also holds for the case of $m+1$, i.e.,
   \begin{align}\label{aimpr1}
 \sum^{m+1}_{i=1}P_{i,1}^{\rm H}=\cdots =  \sum^{m+1}_{i=m}P_{i,m}^{\rm H}=P_{m+1,m+1}^{\rm H}.
 \end{align}
 
 Recall that ${\rm U}_{m+1}$'s transmit power in the first $(m+1)$ time slots can be written as follows:
  \begin{align}\label{pmis2}
 P_{m+1,i} ^{\rm H}=& \left(\frac{e^R}{   \prod^{m+1}_{p=1}\frac{ \gamma_{m+1 } }{ \gamma_{m+1 } \sum^{m}_{j=p}P_{j,p}^{\rm H} +1}}\right)^{\frac{1}{m+1}} \\\nonumber & -\frac{ \gamma_{m+1 } \sum^{m}_{j=i}P_{j,i} ^{\rm H}+1}{  \gamma_{m+1 } },
\end{align}
for $1\leq i \leq m+1$. 

We note that $ P_{m+1,i}^{\rm H} = P_{m+1,p}^{\rm H} $, for $1\leq i, p\leq m$, by using the assumption made in \eqref{assum}, as explained  in the following. The key observation is that the first term on the right-hand side of \eqref{pmis2} is the same for all $ P^{\rm H}_{m+1,i}$. Therefore, the conclusion that $ P^{\rm H}_{m+1,i} = P^{\rm H}_{m+1,p} $, for $1\leq i, p\leq m$, can be established if the following equality holds
  \begin{align} \label{newc3}
  \frac{ \gamma_{m+1 } \sum^{m}_{j=i}P^{\rm H}_{j,i} +1}{  \gamma_{m+1 } } =   \frac{ \gamma_{m+1 } \sum^{m}_{j=p}P^{\rm H}_{j,p} +1}{  \gamma_{m+1 } } ,
\end{align}
which is true given the assumption made in \eqref{assum}. Because $ P^{\rm H}_{m+1,i} = P^{\rm H}_{m+1,p} $, for $1\leq i, p\leq m$, the use of  \eqref{assum} leads to the following conclusion:
   \begin{align}
 \sum^{m+1}_{i=1}P^{\rm H}_{i,1}=\cdots =  \sum^{m+1}_{i=m}P^{\rm H}_{i,m},
 \end{align}
 which proves a part of \eqref{aimpr1}.  
 Therefore, the proof of the lemma can be completed  by showing that   $\sum^{m+1}_{i=1}P^{\rm H}_{i,1}=P^{\rm H}_{m+1,m+1}$, which is challenging to    prove directly. We note that  ${\rm U}_{m+1}$'s achievable data rates in the first $m$ time slots are identical, i.e.,
    \begin{align}
 \log\left(
 1+ \frac{\gamma_{m+1}P^{\rm H}_{m+1,i}}{\gamma_{m+1} \sum^{m}_{j=i}P^{\rm H}_{j,i}}\right)= \log\left(
 1+ \frac{\gamma_{m+1}P^{\rm H}_{m+1,p}}{\gamma_{m+1} \sum^{m}_{j=p}P^{\rm H}_{j,p}}\right),
 \end{align}
 since $\sum^{m}_{j=i}P^{\rm H}_{j,i}  =  \sum^{m}_{j=p}P^{\rm H}_{j,p}$ as shown in    \eqref{assum} and   $ P^{\rm H}_{m+1,i} = P^{\rm H}_{m+1,p} $, for $1\leq i, p\leq m$. Therefore, the
 $ P^{\rm H}_{m+1,i}$, $1\leq i \leq m+1$, are also the optimal solution of the following optimization problem:
   \begin{problem}\label{pb:a1} 
  \begin{alignat}{2}
\hspace{-1em}\underset{P _{m+1,1},P _{m+1,m+1}   }{\rm{min}} &\quad   mP _{m+1,1}+ P _{m+1,m+1}  \label{a1tst:1}
\\ \nonumber s.t. &\quad  m \log\left(
 1+ \frac{\gamma_{m+1}P _{m+1,1}}{\gamma_{m+1} \sum^{m}_{j=1}P _{j,1}+1}\right)\\
 &\quad +  \log\left(
 1+  \gamma_{m+1}P _{m+1,m+1}\right) \geq R.    \label{a1tst:2} 
  \end{alignat}
\end{problem} 
The difference between problems \eqref{pb:a1} and \eqref{pb:2} is that  in problems \eqref{pb:a1}, the first $m$ time slots are merged together to become a single time slot with duration $mT$. In other words,  problem \eqref{pb:a1} can be viewed as a simple two-user case. 
By following   steps similar to those for solving problem \eqref{pb:2},  the alternative expressions     for $P^{\rm H}_{m+1,1}$ and $P^{\rm H}_{m+1,m+1}$ can be obtained as follows:
\begin{align}
P^{\rm H}_{m+1,1} =& \left(\frac{e^R}{\psi^m\gamma_{m+1}}\right)^{\frac{1}{m+1}}
-\frac{I_m\gamma_{m+1}+1}{\gamma_{m+1}},\\\nonumber
 P^{\rm H}_{m+1,m+1} = & \left(\frac{e^R}{\psi^m\gamma_{m+1}}\right)^{\frac{1}{m+1}} - \frac{1}{\gamma_{m+1}},
\end{align} 
where $I_m = \sum^{m}_{j=1}P^{\rm H}_{j,1}$ and $\psi=\frac{\gamma_{m+1}}{I_m\gamma_{m+1}+1}$. 

Proving $\sum^{m+1}_{i=1}P^{\rm H}_{i,1}=P^{\rm H}_{m+1,m+1}$ is equivalent to showing $P^{\rm H}_{m+1,1} +I_m=P^{\rm H}_{m+1,m+1}$, which holds since
\begin{align}
P^{\rm H}_{m+1,1} +I_m=& \left(\frac{e^R}{\psi^m\gamma_{m+1}}\right)^{\frac{1}{m+1}}
-\frac{I_m\gamma_{m+1}+1}{\gamma_{m+1}}+ I_m\\\nonumber
=  & \left(\frac{e^R}{\psi^m\gamma_{m+1}}\right)^{\frac{1}{m+1}} - \frac{1}{\gamma_{m+1}}= P^{\rm H}_{m+1,m+1} .
\end{align} 
Therefore,   the proof of the lemma is complete. 

\section{Proof for Lemma \ref{lemma3}}\label{proof3}
Recall that the following  observations:
\begin{enumerate}
\item The power allocation solution shown in Lemma \ref{lemma1} is an optimal solution of problem \eqref{pb:2};
\item The feasible set of problem \eqref{pb:2} is larger than that of problem \eqref{pb:1.5}, which means that the optimal value of problem \eqref{pb:2} is no larger than that of problem \eqref{pb:1.5}; 
\end{enumerate}
By using these facts, the lemma that the power allocation solution shown in Lemma \ref{lemma1} is the optimal solution of problem \eqref{pb:1.5} can be proved, by showing that  this solution is feasible to problem \eqref{pb:1.5}, i.e.,   $P^{\rm H}_{m,i}>0$, $1\leq i \leq m$.

We first note that ${\rm U}_m$ always     uses the $m$-th time slot (the OMA time slot), i.e.,   $P^{\rm H}_{m,m}>0$, which can be straightforwardly established, as shown in the following. Recall from \eqref{pmis1} that $ P^{\rm H}_{m,m}$ can be expressed as follows:
 \begin{align} 
 P^{\rm H}_{m,m} = \left(\frac{e^R}{   \prod^{m}_{p=1}\frac{ \gamma_{m } }{ \gamma_{m } \sum^{m-1}_{j=p}P^{\rm H}_{j,p} +1}}\right)^{\frac{1}{m}}  -\frac{ 1}{  \gamma_{m } }.
\end{align}
If $ P^{\rm H}_{m,m}\leq 0$, the following inequality needs to hold
 \begin{align} 
 \left(\frac{e^R}{   \prod^{m}_{p=1}\frac{ \gamma_{m } }{ \gamma_{m } \sum^{m-1}_{j=p}P^{\rm H}_{j,p} +1}}\right)^{\frac{1}{m}} \leq  \frac{ 1}{  \gamma_{m } }.
\end{align}
The above inequality is equivalent to the following: 
 \begin{align} \label{xddd1}
 e^R   \leq     \prod^{m}_{p=1}\frac{ 1 }{ \gamma_{m } \sum^{m-1}_{j=p}P^{\rm H}_{j,p} +1}.
\end{align}
We note that $e^R>1$, since $R>0$. However,   $\prod^{m}_{p=1}\frac{ 1 }{ \gamma_{m } \sum^{m-1}_{j=p}P^{\rm H}_{j,p} +1}\leq 1$, which means the inequality in \eqref{xddd1} cannot hold, and hence $ P^{\rm H}_{m,m} $ is strictly    positive. 

Therefore, the lemma can be proved showing   that ${\rm U}_m$'s hybrid NOMA power allocations   for  the first $m-1$ time slots need to be  also strictly positive, i.e., $P^{\rm H}_{m,i}>0$, for $1\leq i \leq m-1$, which   can be proved by mathematical induction. 

\subsection{The Base Case $m=2$ }
For the special case of $m=2$, by using \eqref{pmis1}, ${\rm U}_2$'s transmit power during the first time slot is given by
 \begin{align}
P^{\rm H}_{2,1} =& \left(\frac{e^R}{   \frac{ \gamma_{2 }^2 }{ \gamma_{2 }  P^{\rm H}_{1,1} +1} }\right)^{\frac{1}{2}}  -\frac{  \gamma_{2 }  P^{\rm H}_{1,1} +1}{   \gamma_{2 } }.
\end{align}
Therefore,  the OMA mode is used if $ P^{\rm H}_{2,1}\leq0$, i.e., 
  \begin{align}\label{oma2}
\left(\frac{e^R}{   \frac{ \gamma_{2 }^2 }{ \gamma_{2 }  P^{\rm H}_{1,1} +1} }\right)^{\frac{1}{2}}  \leq \frac{  \gamma_{2 }  P^{\rm H}_{1,1} +1}{   \gamma_{2 } },
\end{align}
 which can be simplified as follows: 
   \begin{align}
R    \leq   \log \left(1+\gamma_2P^{\rm H}_{1,1}\right) < \log \left(1+ 
  \gamma_1 P^{\rm H}_{1,1}\right),
\end{align}
where the last inequality follows by the fact that $\gamma_1>\gamma_2$. 
By using the fact that   $P^{\rm H}_{1,1}$ is chosen to ensure $\log \left(1+ 
  \gamma_1 P^{\rm H}_{1,1}\right)=R$, the following contradiction can be established:
     \begin{align}
R      < \log \left(1+ 
  \gamma_1 P^{\rm H}_{1,1}\right)=R,
\end{align}
which means that $P_{2,1}>0$, and hence for the special case of $m=2$, the lemma holds. 

 \subsection{Inductive Step}
 Assume that the lemma holds for the case of $m$, i.e.,    $P^{\rm H}_{j,i}>0$,  $i\leq m-1$ and $j\leq m$, which makes    Lemma \ref{lemma2}   applicable and  leads to  the following equality:
 \begin{align}\label{lemmaxx}
 \sum^{m}_{j=1}P^{\rm H}_{j,1}=\cdots =  \sum^{m}_{j=m-1}P^{\rm H}_{j,m-1}=P^{\rm H}_{m,m}\triangleq I_m.
 \end{align}

 The aim of this section is to prove that the lemma also holds for the case of $m+1$, i.e., $P^{\rm H}_{m+1,i}>0$, $i\leq m$, which is also challenging to   prove directly.      Recall that ${\rm U}_{m+1}$'s achievable data rates   during the first $m$ time slots are given by
 \begin{align}
 R_{m+1,i} = \log\left(
 1+\frac{\gamma_{m+1}P^{\rm H}_{m+1,i}}{\gamma_{m+1}\sum^{m}_{j=i}P^{\rm H}_{j,i}+1 }
 \right),
 \end{align}
 for $1\leq i \leq m$. The use of \eqref{lemmaxx} indicates that ${\rm U}_{m+1}$ suffers the same amount of interference ($I_m$) during each of the first $m$ time slots. In addition, the user's channel gains during the first $m$ time slots are  also the same. Therefore,   ${\rm U}_{m+1}$'s transmit powers during the first $m$ time slots, i.e., $P^{\rm H}_{m+1,i}$, $1\leq i \leq m$, must be the same, i.e.,  $P^{\rm H}_{m+1,1}=\cdots =P^{\rm H}_{m+1,m}$. In this case,  ${\rm U}_{m+1}$'s transmit powers can    be alternatively obtained from the following optimization problem:
    \begin{problem}\label{pb:a2} 
  \begin{alignat}{2}
\hspace{-1em}\underset{P_{m+1,1}, P_{m+1,m+1}     }{\rm{min}} &\quad   mP_{m+1,1}+ P_{m+1,m+1}  \label{a2tst:1}
\\ \nonumber s.t. &\quad  m \log\left(
 1+ \frac{\gamma_{m+1}P_{m+1,1}}{\gamma_{m+1}I_m+1}\right)\\
 &\quad +  \log\left(
 1+  \gamma_{m+1}P_{m+1,m+1}\right) \geq R,    \label{a2tst:2} 
  \end{alignat}
\end{problem}
which is identical to problem \eqref{pb:a1}. Therefore,  $P^{\rm H}_{m+1,1}$ can be expressed as follows:
\begin{align}
P^{\rm H}_{m+1,1} =& \left(\frac{e^R}{\psi^m\gamma_{m+1}}\right)^{\frac{1}{m+1}}
-\frac{I_m\gamma_{m+1}+1}{\gamma_{m+1}} .
\end{align} 
To ensure $P^{\rm H}_{m+1,1} >0$, the following inequality needs to hold:
\begin{align}
  \left(\frac{e^R}{\psi^m\gamma_{m+1}}\right)^{\frac{1}{m+1}}
>\frac{I_m\gamma_{m+1}+1}{\gamma_{m+1}} ,
\end{align} 
which can be rewritten as follows
\begin{align}\label{ineq1}
 R 
>&\log\left(\frac{\psi^m \left(I_m\gamma_{m+1}+1\right)^{m+1}}{h^{m}_{m+1}} \right).
\end{align} 
Recall that $\psi=\frac{\gamma_{m+1}}{I_m\gamma_{m+1}+1}$, and hence the inequality in \eqref{ineq1} can be expressed as follows:
\begin{align}\nonumber
 R 
>&\log\left(\frac{ \frac{h^m_{m+1}}{(I_m \gamma_{m+1}+1)^m} \left(I_m\gamma_{m+1}+1\right)^{m+1}}{h^{m}_{m+1}} \right)\\ 
 =&\log\left(  I_m\gamma_{m+1}+1 \right).\label{ineq2}
\end{align} 
In order to show that $ R 
> \log\left(  I_m\gamma_{m+1}+1 \right)$, the fact that the users' channel gains are ordered can be used to show  the following inequality: 
\begin{align}\label{cdsx1}
\log\left(  I_m\gamma_{m+1}+1 \right)<\log\left(  I_m\gamma_{m}+1 \right)=\log\left(  P^{\rm H}_{m,m}\gamma_{m}+1 \right),
\end{align}
where the last step follows from the  equality in \eqref{lemmaxx}.  

The   stationarity of the KKT conditions for problem \eqref{pb:2} leads to the following conclusion:
 \begin{align}\label{cfde}
&\sum^{m-1}_{i=1} \log\left(
 1+\frac{\gamma_{m}P^{\rm H}_{m,i}}{\gamma_{m}\sum^{m-1}_{j=i}P^{\rm H}_{j,i}+1 }
 \right)\\\nonumber &+\log\left(  P^{\rm H}_{m,m}\gamma_{m}+1 \right)=R.
 \end{align}
By using \eqref{cfde} and also   the assumption that $P^{\rm H}_{m,i}>0$ for $1\leq i\leq m-1$, the following inequality can be established: 
 \begin{align}\label{cfde1}
 \log\left(  P^{\rm H}_{m,m}\gamma_{m}+1 \right)<R.
 \end{align}
By combining \eqref{cdsx1} with \eqref{cfde1}, $R>\log\left(  I_m\gamma_{m+1}+1 \right)$ is proved, which means that $P^{\rm H}_{m+1,i}>0$, $1\leq i\leq m$.  Therefore,   the proof of the lemma is complete.

\section{Proof of Lemma \ref{lemma4}}\label{proof4x}
The lemma can be again proved by mathematical induction. For the base case $m=1$,   it is straightforward to show that there is a single optimal solution for problem \eqref{pb:1.5}. 

For the inductive step, assume that for the case $m-1$, the lemma holds, i.e., ${\rm U}_i$, $1\leq i \leq m-1$, chooses hybrid NOMA,  which makes   Lemma \ref{lemma2} applicable.  The aim of the proof is to show that  the lemma holds for the case of $m$.
 
Since problem \eqref{pb:1.5} is convex, its optimal solution needs to satisfy the KKT conditions. By analyzing the KKT conditions,  it is straightforward to show that if there exists another optimal solution for problem \eqref{pb:1.5}, one or multiple $P_{m,i}$ need to be  zero.     Without loss of generality, denote $\mathcal{S}$ as the subset that collects the indices of $P_{m,i}$,   which are   zero, i.e.,   $P_{m,i}= 0$, for $i\in \mathcal{S}$. Define $\mathcal{S}^c$ as the complementary set of $\mathcal{S}$. We note that $m$ must be  included in $\mathcal{S}^c$ since  $P_{m,m}$ cannot be zero, as discussed in the proof of Lemma \ref{lemma3}.  Therefore,     the use of the KKT condition shown in \eqref{kkt1x} leads to the following conclusion:  
\begin{align}\label{kkty1x}
 1 - \lambda  b_{m,i}\gamma_m =0,  i\in\mathcal{S},
\end{align} 
which can be rewritten as follows:
\begin{align}\label{kkty1x2}
 0=1 - \frac{\lambda   \gamma_m}{\gamma_m\sum^{m-1}_{j=i}P_{j,i} +1} =1 - \frac{\lambda   \gamma_m}{\gamma_mP_{m-1,m-1} +1} ,  
\end{align} 
where the first step follows by $b_{m,i}=\frac{1}{\gamma_m\sum^{m-1}_{j=i}P_{j,i} +1}$, and the last step follows by Lemma \ref{lemma2}. Therefore, the fact that  $P_{m,i}=0$,   $i\in \mathcal{S}$, leads to the following expression of   $\lambda$:
\begin{align}\label{kkty1x2}
 \lambda = P_{m-1,m-1} +\frac{1}{\gamma_m}.
\end{align} 

On the other hand,   \eqref{kkt1x} can be used to obtain  the following expression for $ P_{m,j} $,  $j\in \mathcal{S}^c$:
 \begin{align}\label{pim1xdd}
 P_{m,j}  =\lambda  - \frac{1}{b_{m,i}\gamma_m}, j\in \mathcal{S}^c.
\end{align}
Because of the complementary slackness condition, the  $P_{m,j}$, $j\in \mathcal{S}^c$, need to satisfy the following condition:
\begin{align}
\sum_{j\in\mathcal{S}^c}\log\left(1+b_{m,j}\gamma_{m}P_{m,j}\right) = R,
\end{align}
which can be rewritten  as follows: 
\begin{align}
\prod_{j\in\mathcal{S}^c}\lambda b_{m,j}\gamma_m  =e^ R.
\end{align}
By using the expressions of $b_{m,j}$ and $\lambda$ in \eqref{kkty1x2}, the above equality can be expressed as follows:
\begin{align}\label{dfde1}
\left(\gamma_mP_{m-1,m-1} +1\right)^{ |\mathcal{S}^c|}\prod_{j\in\mathcal{S}^c}  \frac{1}{\gamma_m\sum^{m-1}_{j=i}P_{j,i} +1} =e^ R.
\end{align}
By applying Lemma \ref{lemma2}, \eqref{dfde1} can be simplified as follows:
\begin{align}\label{df32}
 \gamma_mP_{m-1,m-1} +1 =e^ R.
\end{align}
Recall that $P_{m-1,m-1}$ is a function of $\gamma_{m-1}$ and not related to $\gamma_m$. Therefore, the probability for the equality in \eqref{df32} to hold is zero since    the users' channel gains are independent fading.   Therefore, the solution shown in Lemma \ref{lemma1} is the only optimal solution of problem \eqref{pb:1.5}, which completes the proof of Lemma \ref{lemma4}. 
\bibliographystyle{IEEEtran}
\bibliography{IEEEfull,trasfer}
  \end{document}